\documentclass[conference]{IEEEtran}
\IEEEoverridecommandlockouts

\usepackage{cite}
\usepackage{amsmath,amssymb,amsfonts}
\usepackage{bm,mathtools,comment}
\usepackage{graphicx}
\usepackage{xcolor}
\usepackage{url}
\usepackage{subcaption}
\usepackage{amsthm}
\newtheorem{proposition}{Proposition}

\DeclareMathOperator*{\argmin}{arg\,min}

\DeclareMathOperator{\tr}{tr}
\DeclareMathOperator{\vecop}{vec}
\DeclareMathOperator{\diag}{diag}

\def\RR{\mathbb{R}}

\def\cN{\mathcal{N}}
\def\bI{\mathbf{I}}
\def\be{\mathbf{e}}
\def\by{\mathbf{y}}
\def\bY{\mathbf{Y}}
\def\bu{\mathbf{u}}

\def\bA{\mathbf{A}}

\def\bK{\mathbf{K}}

\def\br{\mathbf{r}}

\def\bs{\mathbf{s}}

\def\bSigma{\mathbf{\Sigma}}

\begin{document}

\title{Causal Spatio-Temporal Sound Field Reconstruction
}

\author{\IEEEauthorblockN{David Sundstr\"om, Filip Tronarp, Johan Lindstr\"om, Andreas Jakobsson}
\IEEEauthorblockA{Centre for Mathematical Sciences, Lund University, Sweden}
}

\maketitle

\begin{abstract}
In sound field control applications, it is commonly assumed that one has access to an accurate representation of the sound field in the region of interest. 
This is a problematic assumption since the reconstruction of a sound field from available microphone measurements is especially challenging in real-time applications where only causal measurements are available. 
Notably, causal time-windowed observations introduce correlation between frequency components, making sound field reconstruction methods that process each frequency band independently sub-optimal.  
In this work, we formulate a causal finite-window spatio-temporal linear minimum mean-square error estimator for sound field reconstruction.
The sound field is modeled as the solution to the wave equation driven by a stationary stochastic spatio-temporal source distribution, which induces a physically interpretable covariance function. 
It is shown that this covariance function is closely related to the classical diffuse-field coherence model.
Since the computational complexity grows rapidly with the number of spatio-temporal observations, we formulate a budget-constrained spatio-temporal sample selection approach to minimize the posterior reconstruction variance.
The proposed estimator and sampling strategy are evaluated using both simulated and measured sound fields, demonstrating improved short-window reconstruction compared to frequency domain finite-window baselines. 
\end{abstract}

\section{Introduction}
\label{sect:background}

\IEEEPARstart{R}{econstructing} a sound field from spatially distributed microphone measurements is central in applications where the acoustic pressure must be known, controlled, or rendered over an extended region.
Examples include spatial active noise control \cite{zhang2018active}, sound zone control \cite{brunnstrom2022variable}, and spatial audio reproduction \cite{amengual2021optimizations}. 
The idea of controlling a sound field in locations away from physical microphone locations was demonstrated  early on \cite{garcia1997generation,spors2007approach}.
More recently, interest has emerged in controlling or reproducing a sound field in a continuous target region, typically relying on the assumed availability of an accurate reconstruction of the sound field \cite{zhang2018active,Koyama2021,brunnstrom2022variable,amengual2021optimizations}. 
However, the problem of reconstructing the sound field from microphone measurements is non-trivial and remains challenging in many applications where real-time constraints are imposed. 
In these settings, the sound field must be reconstructed using only causal observations and typically with limited computational resources, which is the focus here. %in this paper.

A large body of work on sound field reconstruction is based on physical dictionary-based representations of acoustic fields. Classical approaches include modal and spherical harmonic representations, plane wave decompositions, and equivalent source methods \cite{williams1999fourier,koopmann1989method,johnson1998equivalent, samarasinghe2015efficient}. Related sparse and regularized formulations have been used for the interpolation of sound fields in room acoustic environments, often by exploiting equivalent source models or plane wave structures \cite{tervo2013spatial,mignot2013room,verburg2018reconstruction, antonello2017room,hahmann2022convolutional,sundstrom2023recursive}. 
These methods impose the acoustic structure in a physically meaningful way and can provide accurate reconstructions from limited spatial measurements. 
However, the models typically rely on estimating a large set of parameters and are thus suitable for offline settings rather than addressing the here considered causal finite-window estimation problem. 

A closely related class of reconstruction methods is based on kernel ridge regression and Gaussian-process (GP) regression. In the setting with Gaussian observation noise, the kernel ridge regression estimator can be interpreted as the posterior mean of a Gaussian process with the kernel as covariance function \cite{wahba1990spline,rasmussen2005gaussian}. We therefore refer to this literature jointly as kernel/GP sound field reconstruction. 
This viewpoint is attractive because physical prior knowledge can be encoded in the covariance function while retaining a closed form linear estimator and, in the GP interpretation, a posterior measure of uncertainty. Several sound field reconstruction methods have exploited kernels constrained by the Helmholtz equation, including diffuse field kernels \cite{ueno2017sound,ueno2018kernel}, directionally weighted kernels \cite{horiuchi2021kernel}, and learned kernel parameterizations \cite{ribeiro2024sound,sundstrom2024sound}. We refer to \cite{caviedes2021gaussian} for an overview of GP models related to sound field reconstruction. The resulting linear estimators have also been used directly for sound field control formulations \cite{Koyama2021,brunnstrom2022variable}. 
Many of these estimators are formulated per frequency, 
%posed frequency by frequency, 
which is natural for stationary fields observed over sufficiently long time intervals. 
However, this does not specify estimators associated with a short causal observation window.

The finite window setting is important because a rectangular observation window generally induces spectral leakage and cross-correlation between discrete Fourier coefficients. Consequently, estimators that process short-window frequency bins independently neglect part of the covariance structure available in the observations. This issue is closely related to the role of finite data windows, transients, and periodicity assumptions in frequency domain system identification \cite{pintelon2002frequency,schoukens2004time,ljung2004state}. Thus, the limitation is not the frequency domain representation of the sound field itself. In fact, a frequency domain estimator retaining the full cross-frequency covariance would be equivalent to a corresponding time-domain estimator. 
%Rather, the approximation arises when the causal finite-window covariance is replaced by independent frequency-bin problems.
Rather, the approximation arises when the causal finite-window covariance is assumed to be independent between frequency bins.

The limitation of kernel-based models that are defined for single frequencies was recently discussed in the context of sound field estimation in \cite{brunnstrom2025time}. 
Additionally, a framework was proposed to build discrete time sound field kernel models whose frequency components must be considered jointly. 
It was furthermore shown that this can be useful for jointly regularizing the spatial interpolation of room impulse responses (RIRs) with temporal, spatial, and spectral structures. 
Although our goal is different, the estimator proposed in this paper may be interpreted within such a framework by choosing a kernel that corresponds to the spatio-temporal covariance induced by the stochastic wave equation model derived below. 

Spatio-temporal sound field models have also been proposed in the context of RIR reconstruction. Examples include sparse spatio-temporal models \cite{antonello2017room}, Bayesian and GP formulations \cite{fernandez2021reconstruction,caviedes2023spatio}, optimal transport-based methods \cite{Sundstrom2023}, and physics-informed neural-network approaches \cite{karakonstantis2024room,olivieri2024physics}.
These works are important since they model the information contained in the joint spatial and temporal structure of acoustic fields. 
Their typical objective, however, is offline interpolation or extrapolation of RIRs over extended domains, often using rich nonstationary covariance models, learned representations, or iterative inference procedures.
In contrast, the present work %instead 
addresses a setting related to low-latency reconstruction of the current sound pressure field from a short causal window of streaming microphone measurements.

To obtain a physically interpretable covariance appropriate for sound field control applications, i.e., 
not requiring an
%without 
online estimation of a large number of spatial parameters, we model the sound field as the solution to the wave equation driven by a stationary stochastic source distribution. The proposed estimator is closely related to classical diffuse-field models \cite{cook1955measurement,jacobsen2013fundamentals,badeau2024statistical} and may be interpreted as a time domain generalization of the classical diffuse-field covariance. 
Unlike frequency domain bin-wise diffuse kernels applied to a short observation window, the proposed covariance retains the temporal correlations induced by the finite causal window.

Finally, the causal spatio-temporal formulation has a computational complexity that scales rapidly with the number of spatio-temporal samples in the observation window, a limiting %practical 
aspect in practical %for these 
applications. 
To alleviate this, we here
%We therefore 
formulate the problem of selecting the (fixed) number of spatio-temporal samples that minimizes the reconstruction variance in the target region. 
This formulation may be viewed as a spatio-temporal generalization of the optimal sensor selection problems that have been studied in various forms, for example based on mutual information \cite{ariga2020mutual}, conditional entropy \cite{nishida2022region}, or using a Bayesian formulation  \cite{verburg2024optimal}.
The present work assumes a fixed microphone geometry and selects an informative subset of the spatio-temporal samples available in a causal window. This turns the computational budget into a sampling design problem over both the microphone index and the time lag.

The contributions of this work are threefold. First, we formulate causal finite-window sound field reconstruction as a spatio-temporal linear minimum mean-square error (LMMSE) problem using a covariance induced by a stochastic wave equation model. Second, we show that the proposed covariance reduces to the classical diffuse-field coherence model in the far-field limit while retaining the finite-window temporal correlations needed for causal reconstruction. Third, we introduce a budget-constrained spatio-temporal sample selection problem, which reduces the number of observations used by the estimator by minimizing the posterior reconstruction variance.

The remainder of the paper is organized as follows. Section~\ref{sec:problem_statement} formulates the causal finite-window sound field reconstruction problem and introduces the observation model. Section~\ref{sec:spde_measurement} derives the stochastic wave-equation covariance model and relates it to classical diffuse-field coherence models. Section~\ref{sec:reconstruction} presents the resulting spatio-temporal linear MMSE estimator. Section~\ref{sec:optimal_sampling} formulates budget-constrained spatio-temporal sample selection for reducing computational complexity. Section~\ref{sec:num_experiments} evaluates the proposed reconstruction and sampling methods on simulated and measured sound fields. Finally, Section~\ref{sec:conclusion} concludes the paper.

%=========================================================
\section{Problem Statement}
\def\cZ{\mathbb{Z}}
\def\brhat{\hat{\mathbf{r}}}
\label{sec:problem_statement}
We represent the sound field as a spatio-temporal stochastic process whose distribution is generated by the wave equation
\begin{align}
  \frac{1}{c^2}\frac{\partial^2}{\partial t^2}u(t,\br)
  - \Delta u(t,\br)
  = s(t,\br),
  \label{eq:wave_eq}
\end{align}
where $(t,\br)\in\RR\times\RR^3$, $c\in \RR_+$ is the speed of sound, and $s(t,\br)$ is a stochastic source distribution. 
There are two main reasons for studying the problem using a stochastic source distribution. First, it allows for the introduction of uncertainty into the model for settings where a deterministic source distribution is not fully known. Second, several acoustic phenomena, such as late reverberation in room acoustics \cite{badeau2024statistical}, are well described as stochastic fields. In practice, the sound field in \eqref{eq:wave_eq} is measured under additive noise for $W\in \cZ_+$ time samples using $M\in \cZ_+$ spatially distributed microphones, resulting in the measurement model
\begin{align}
  y_m[n] = u(t_n,\br_m) + v_m[n],
  \label{eq:y_meas}
\end{align}
where $\{\br_m\}_{m=1}^M$ are the microphone positions and $v_m[n]$ is the observation noise that is assumed to be well modeled as being independent and identically distributed Gaussian noise with variance $\sigma^2$, i.e., $v_m[n]\sim \cN(0,\sigma^2)$. 
For notational simplicity, we 
assume uniform sampling, \(t_n=nT_s\), \(T_s=1/f_s\).
Let the stacked observation matrix be defined as
\begin{align}
  \by_0[n]
  &=
  \begin{bmatrix}y_1[n] & \dots & y_M[n]\end{bmatrix}^\top\in\RR^M,
  \nonumber\\
  \bY[n]
  &=
  \begin{bmatrix}\by_0[n] & \dots & \by_0[n-W+1]\end{bmatrix}\in\RR^{M\times W}.
  \label{eq:Y_stack}
\end{align}
Below, in Section~\ref{sec:reconstruction}, we consider the problem of estimating the sound field 
\begin{align}
  \bu[n]
  &=
  \begin{bmatrix}
    u(t_n,\brhat_1) & \dots & u(t_n,\brhat_P)
  \end{bmatrix}^\top\in\RR^P,
  \label{eq:U_stack}
\end{align}
given the observations $\bY[n]$, i.e., the estimation 
of the sound field at time $t_n$ at $P$ locations, $\{\brhat_p\}_{p=1}^P$, using a causal finite window of length $W$ from each of the $M$ sensors. 
As the resulting computational complexity grows rapidly with both $W$ and $M$, we will in the following also %Section~\ref{sec:optimal_sampling} 
examine how an informative subset of $K$ samples may be selected from this candidate set, substantially reducing the computational cost of the resulting estimator.

%=========================================================
\section{Stochastic Wave Equation Measurement Model}
\label{sec:spde_measurement}
We proceed to characterize the joint spatio-temporal distribution of the sound field under the model in \eqref{eq:wave_eq}. We assume that \(s\) is a zero-mean GP with covariance as specified below in Section~\ref{sec:source_model}. Since the wave equation is linear, the induced sound field \(u\) is also a GP, whose mean and covariance are derived in the following.
The sound field in \eqref{eq:wave_eq} may be represented in terms of its Green's function such that
\begin{align}
  u(t,\br)
  = \int_{\RR^3}\int_{\RR}G(t,\br;\tau,\br')\,s(\tau,\br')\,d\tau\,d\br',
  \label{eq:u_green}
\end{align}
where the spatio-temporal Green's function is defined as the solution to
\begin{align}
  \frac{1}{c^2}\frac{\partial^2}{\partial t^2}G(t,\br;\tau,\br') - \Delta G(t,\br;\tau,\br')
  &= \delta(t-\tau)\delta(\br-\br').
  \label{eq:green_def}
\end{align}
In 3D 
free space, the Green's function is given by
\begin{align}
  G(t,\br;\tau,\br')
  &= \frac{1}{4\pi \|\br-\br'\|_2}\,
    \delta\!\Big(t-\tau-\frac{\|\br-\br'\|_2}{c}\Big).
  \label{eq:green_3d}
\end{align}
The expectation and covariance of \eqref{eq:u_green} can thus be written as
\begin{align}
  \mathbb{E}[u(t,\br)]
  &=
  \int_{\RR^3}\int_{\RR}G(t,\br;\tau,\br')\,\mathbb{E}[s(\tau,\br')]\,d\tau\,d\br'
  \label{eq:u_mean}
\end{align}
and 
\begin{align}
  &C_u\!\big((t,\br),(t',\br')\big)
  =  \nonumber \\
  &\int_{\RR}\!\int_{\RR^3}\!\int_{\RR}\!\int_{\RR^3}\!
  G(t,\br;\tau,\br_0)\,G(t',\br';\tau',\br_1)C_s((\tau,\br_0),(\tau',\br_1)) \nonumber \\ 
  & \hspace{15em} \times d\tau\,d\br_0\,d\tau'\,d\br_1, 
  \label{eq:u_covariance}
\end{align}
respectively, where $C_s$ denotes the spatio-temporal covariance of the source distribution. Note in particular that the distribution of the sound field can be characterized by the mean and covariance of the source distribution. 
However, the four integrals over unbounded domains in \eqref{eq:u_covariance} make numerical calculations of the covariance impractical without imposing further assumptions on the source distribution.

\subsection{Source distribution model}
\def\GP{\mathcal{GP}}
\def\wt{\tilde{w}}
\label{sec:source_model}
Herein, we model the source distribution as a zero-mean GP,
\begin{align}
s \sim \mathcal{GP}(0,C_s),
\end{align}
with separable covariance
\begin{align}
 C_s\!\big((t,\br),(t',\br')\big)
 =
 q\,\tilde w(\br,\br')\kappa(t-t'),
  \label{eq:source_cov}
\end{align}
where \(q\in\RR_+\) is the source intensity, \(\tilde w: \RR^3\times\RR^3\to\RR\)  the spatial covariance, and
\(\kappa:\RR\to\RR\)  the stationary temporal covariance.
The assumption of temporal stationarity should be interpreted as a local approximation over the short (causal) windows used here. This is consistent with standard short-time modeling of audio and speech signals, where signals are often assumed to be approximately stationary over a short window \cite{Naylor2011}.

In realistic acoustic scenes, the spatial statistics may be both spatially structured and time-varying.
Although various specialized parameterizations have been proposed, including \cite{fernandez2021reconstruction} and \cite{sundstrom2025bayesian}, the online estimation of parameters in the covariance function typically introduces high computational complexity and thus falls outside the intended applications considered in this work (for the interested reader, we refer to the literature on offline estimation techniques \cite{sundstrom2023recursive,brunnstrom2025spatial,fernandez2021reconstruction, sundstrom2024sound}). In fact, applications with limited computational resources may not afford the computations required for the joint online estimation of the directivity parameters in the covariance, and must resort to simpler covariance options despite the introduction of a model misspecification. For example, the use of a frequency domain diffuse covariance function was recently shown to be useful for spatial active noise control in \cite{Koyama2021}. Therefore, we focus on the important special setting of a diffuse covariance function by imposing similar assumptions on the source distribution. However, it is worth noting that the methodology developed in this paper holds for any known covariance structure of the form \eqref{eq:source_cov}. The exposition is therefore restricted to a covariance structure of the form
\def\SS{\mathcal{S}}
\begin{align}
 C_s\!\big((t,\br),(t',\br')\big)
 = 
 q\,w(\br)w(\br')\,\delta_{\SS_a}(\br-\br')\,\kappa(t-t'), 
 \label{eq:source_cov_2}
\end{align}
where $\br,\br'\in \SS_a$,
i.e., it is assumed that the source is spatially white on the bounded domain of the surface of a sphere of radius $a$, denoted as $\SS_a$. Here, \(\delta_{\SS_a}\) denotes the Dirac delta with respect to the surface measure on \(\SS_a\).
The radius \(a\) is chosen such that the source sphere encloses the reconstruction and microphone regions. However, as shown in Section~\ref{sec:num_experiments}, the reconstruction is insensitive to the choice of \(a\).
Furthermore, we will here assume $w(\br) = 1$, such that the sound field corresponding to \eqref{eq:source_cov_2} is diffuse with a uniform directivity pattern. 
Evaluating the covariance function in \eqref{eq:u_covariance} using the assumed structure of the source distribution in \eqref{eq:source_cov_2} yields (see Appendix~\ref{app:u_covariance_derivation})
\begin{align}
  C_u\!\big((t,\br),&(t',\br')\big) = \frac{q}{16\pi^2}
  \int_{\SS_a}
  \frac{1}{\|\br-\br_0\|_2\|\br'-\br_0\|_2} \nonumber\\
  & \hspace{-5mm} \times \kappa\!\Big(
    (t-t') - \frac{\|\br-\br_0\|_2-\|\br'-\br_0\|_2}{c}
  \Big)\,dS(\br_0)
  \label{eq:Cu_integral}
\end{align}
allowing 
the integrals in \eqref{eq:u_covariance} to be %have been
reduced to a more numerically tractable integral over $\SS_a$. 
Furthermore, to define $\kappa$ we for simplicity and to isolate the effect of finite-window spatio-temporal reconstruction assume a flat source spectrum over the frequency band of interest, with a spectral density given by
\begin{align}
  \Phi(\omega)
  =
  \begin{cases}
    \pi/(\omega_2-\omega_1), & \omega\in[\omega_1,\omega_2]\cup[-\omega_2,-\omega_1],\\
    0, & \text{otherwise},
  \end{cases}
  \label{eq:Phi_bp}
\end{align}
where $\omega = 2\pi f$.
The induced stationary covariance is the inverse Fourier transform, given by
\begin{align}
  \kappa(\Delta)
  &= \frac{1}{2\pi}\int_{\RR}\Phi(\omega)e^{j\omega\Delta}\,d\omega
  \nonumber\\
  &= \frac{\sin(\omega_2\Delta)-\sin(\omega_1\Delta)}{\Delta (\omega_2 - \omega_1)},
  \label{eq:kappa_bp}
\end{align}
with \(\kappa(0)=1\).
%
% in \eqref{eq:Cu_integral}. 
It is worth noting that the covariance in \eqref{eq:Cu_integral} is stationary in time and can, for a discrete time setting, be written as 
\begin{align}
  R_u(\br,\br';\tau)
  &=
  C_u\!\big((0,\br),(-\tau,\br')\big) \nonumber \\
  &=
  \mathbb E[u(t,\br)u(t-\tau,\br')],
  \\
  C(\br,\br';\ell)
  &=
  R_u(\br,\br';\ell T_s),
  \qquad \ell\in\mathbb Z.
  \label{eq:C_lag}
\end{align}
In this work, we approximate the covariance using a simple equal-area quadrature on the sphere surface given by
\begin{align}
  C(\br,\br';\ell)
  &\approx
  \frac{q}{16\pi^2}
  \sum_{i=1}^Q
  \frac{1}{\|\br-\br_i\|\,\|\br'-\br_i\|} \nonumber\\
  &\times\kappa\!\Big(\ell T_s - \tfrac{\|\br-\br_i\|-\|\br'-\br_i\|}{c}\Big)
  \frac{4\pi a^2}{Q},
  \label{eq:C_quadrature}
\end{align}
where $\br_i$, for \(i=1,\ldots,Q\), are distributed on a nearly uniform Fibonacci lattice on the surface of the sphere \cite{gonzalez2010measurement}. 
To sum up the assumptions introduced in this section, it has in \eqref{eq:source_cov} been assumed that the source distribution is stationary in time, in \eqref{eq:source_cov_2} that it is  spatially white on $\SS_a$, and with a flat spectrum in \eqref{eq:Phi_bp}. 
This yields a covariance with no unknown spatial directivity parameters.
The proposed set of assumptions corresponds to the important special case of a diffuse sound field. 
To make this connection explicit, we next relate the proposed covariance model to established diffuse-field models in the frequency domain.

\subsection{Connection to established diffuse covariance models in the frequency domain}
\label{sec:freq_domain_connection}

To relate the covariance model in \eqref{eq:Cu_integral} to the established
frequency domain diffuse field correlation function \cite{cook1955measurement}, define the cross-spectral density to the stationary covariance such that
\begin{align}
  S_u(\br,\br';\omega)
  &=
  \int_{\RR} R_u(\br,\br';\tau)\,e^{-j\omega\tau}\,d\tau.
  \label{eq:Su_def}
\end{align}
The following proposition shows that the cross-spectral density of \eqref{eq:Su_def} reduces to the classical diffuse-field model in \cite{cook1955measurement} in the far-field limit. Let \(S_{u,a}(\br,\br';\omega)\) denote the cross-spectral density induced by a source sphere of radius \(a\).
\begin{proposition}
\label{prop:diffuse_freq_limit}
Let \(d=\|\br-\br'\|_2\). For fixed interior points \(\br,\br'\),
\begin{align}
  S_{u,a}(\br,\br';\omega)
  \xrightarrow[a\to\infty]{}
  \frac{q\Phi(\omega)}{4\pi}
  \frac{\sin((\omega/c)d)}{(\omega/c)d}.
\end{align}
\end{proposition}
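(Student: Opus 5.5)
The plan is to compute the cross-spectral density in closed form as a surface integral over \(\SS_a\), and then take the limit \(a\to\infty\) inside that integral. Substituting \eqref{eq:Cu_integral} into \(R_u(\br,\br';\tau)=C_u((0,\br),(-\tau,\br'))\) gives, with \(\rho(\br_0)=\|\br-\br_0\|_2\) and \(\rho'(\br_0)=\|\br'-\br_0\|_2\),
\begin{align}
  R_u(\br,\br';\tau)=\frac{q}{16\pi^2}\int_{\SS_a}\frac{\kappa\big(\tau-(\rho-\rho')/c\big)}{\rho\,\rho'}\,dS(\br_0).
\end{align}
Inserting this into \eqref{eq:Su_def}, I would exchange the \(\tau\)-integral with the surface integral. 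I then use the shift property of the Fourier transform, namely that \(\kappa(\cdot-s)\) has transform \(\Phi(\omega)e^{-j\omega s}\), where \(\Phi\) is the transform of \(\kappa\) by \eqref{eq:kappa_bp}. This yields the exact finite-\(a\) formula
\begin{align}
  S_{u,a}(\br,\br';\omega)=\frac{q\Phi(\omega)}{16\pi^2}\int_{\SS_a}\frac{e^{-j(\omega/c)(\rho-\rho')}}{\rho\,\rho'}\,dS(\br_0).
\end{align}
Since \(\kappa\) in \eqref{eq:kappa_bp} is only square-integrable, not absolutely integrable, the exchange must be justified. I would do this in the \(L^2\) (Plancherel) sense: the surface integral is over a compact set, and for interior points the integrand is bounded with bounded weights, so the map from \(\kappa\) to \(R_u\) is a bounded averaging of shifts. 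Alternatively, one can argue distributionally.

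Next, parametrize \(\br_0=a\hat{\bu}\) with \(\hat{\bu}\) on the unit sphere, so that \(dS=a^2\,d\Omega(\hat{\bu})\). For fixed \(\br\), expanding the norm gives
\begin{align}
  \rho=a-\hat{\bu}^\top\br+O(1/a)
\end{align}
uniformly in \(\hat{\bu}\). Hence \(\rho-\rho'\to-\hat{\bu}^\top(\br-\br')\) uniformly, and \(a^2/(\rho\rho')\to1\) uniformly. Both quantities are bounded for \(a\) larger than, say, twice \(\max(\|\br\|,\|\br'\|)\). Dominated convergence then gives
\begin{align}
  S_{u,a}(\br,\br';\omega)\to\frac{q\Phi(\omega)}{16\pi^2}\int_{\|\hat{\bu}\|=1}e^{j(\omega/c)\hat{\bu}^\top(\br-\br')}\,d\Omega(\hat{\bu}).
\end{align}

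The last step is the classical plane-wave average. Choose coordinates with the polar axis along \(\br-\br'\). The integral then becomes
\begin{align}
  2\pi\int_{-1}^{1}e^{jkd\mu}\,d\mu=4\pi\,\frac{\sin(kd)}{kd},\qquad k=\omega/c.
\end{align}
The factor \(4\pi\) cancels against \(16\pi^2\), leaving \(q\Phi(\omega)/(4\pi)\) times the sinc, as claimed. The case \(d=0\) is covered by continuity, with the sinc equal to \(1\).

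I expect the main obstacle to be the rigorous justification of the Fourier exchange in the first step, given that the band-limited \(\kappa\) is not in \(L^1\). The uniform far-field expansion is routine. If the \(L^2\) argument becomes awkward, I would first prove the identity for an integrable mollified \(\kappa_\epsilon\) and then pass to \(\epsilon\to0\) in \(L^2\).
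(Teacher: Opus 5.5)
Your proposal is correct and follows the same route as the paper's proof. Both take the Fourier transform of the surface-integral form of $R_u$ using the shift property, which the paper attributes to Wiener--Khinchin. Both then apply the far-field expansion $\|\br-a\hat{\bs}\|_2=a-\hat{\bs}^\top\br+\mathcal{O}(a^{-1})$ and evaluate the plane-wave average as $4\pi\sin(kd)/(kd)$ with $k=\omega/c$; the paper cites Abramowitz--Stegun (10.1.14) for this instead of computing it in polar coordinates. Your $L^2$ justification of the Fourier exchange for the non-integrable $\kappa$, and your uniform/dominated-convergence argument for the limit $a\to\infty$, make rigorous two steps the paper passes over without comment.
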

\begin{proof}
See Appendix~\ref{app:freq_domain_connection}.
\end{proof}
In particular, in the far-field limit the corresponding normalized spatial coherence is
\begin{align}
  \gamma(\br,\br';\omega)
  &=
  \frac{S_u(\br,\br';\omega)}
  {\sqrt{S_u(\br,\br;\omega)\,S_u(\br',\br';\omega)}}
  =
  \frac{\sin\!\left(\frac{\omega}{c}\|\br-\br'\|_2\right)}
       {\frac{\omega}{c}\|\br-\br'\|_2},
  \label{eq:gamma_diffuse_main}
\end{align}
which coincides with the classical diffuse-field coherence model in \cite{cook1955measurement}. Thus, the proposed SPDE-induced covariance can be viewed as a finite-radius, time-domain generalization of the classical diffuse-field kernel.

Although this means that the covariance of the sound field model is related to the one used in, for example, \cite{ueno2018kernel}, the resulting kernel estimator when independently processing each frequency bin is not equivalent to the spatio-temporal estimator formulated below. To see this, note that finite windowing of a stationary signal replaces ideal frequency samples by weighted averages of the underlying spectrum, producing spectral leakage and cross-bin covariance. 
A motivating derivation is given in Appendix~\ref{app:windowed_csd_appendix}.  Figure~\ref{fig:windowed_csd_ism_pair12} illustrates how this aspect becomes critical for small $W$.

\begin{figure*}[t]
    \centering
    \begin{subfigure}[t]{0.32\textwidth}
        \centering
        \includegraphics[width=\linewidth]{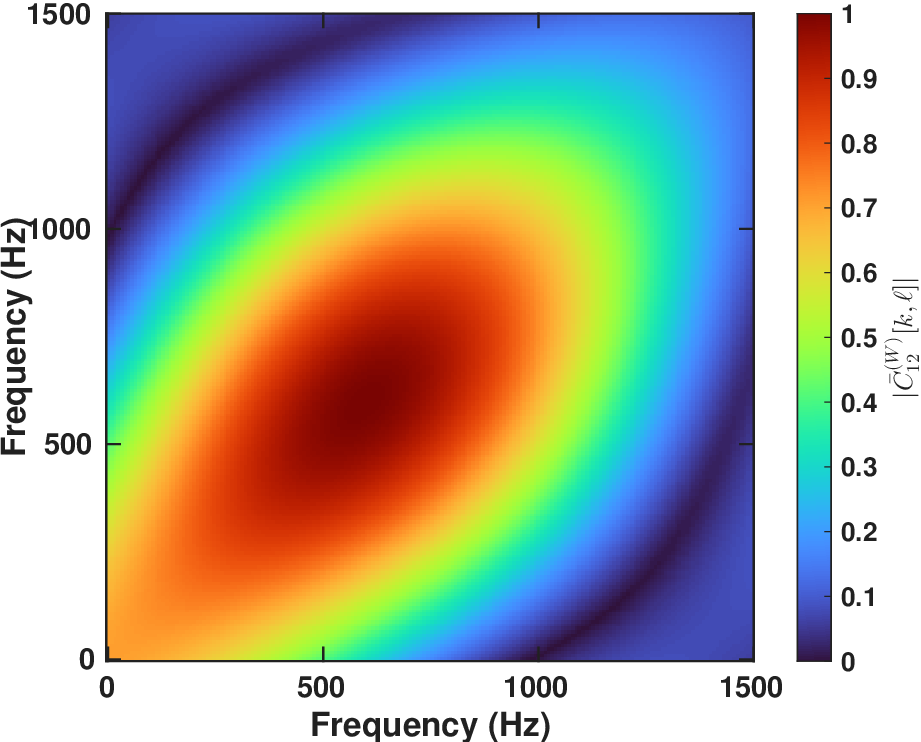}
        \caption{$W=10$.}
        \label{fig:windowed_csd_ism_pair12_w10}
    \end{subfigure}
    \hfill
    \begin{subfigure}[t]{0.32\textwidth}
        \centering
        \includegraphics[width=\linewidth]{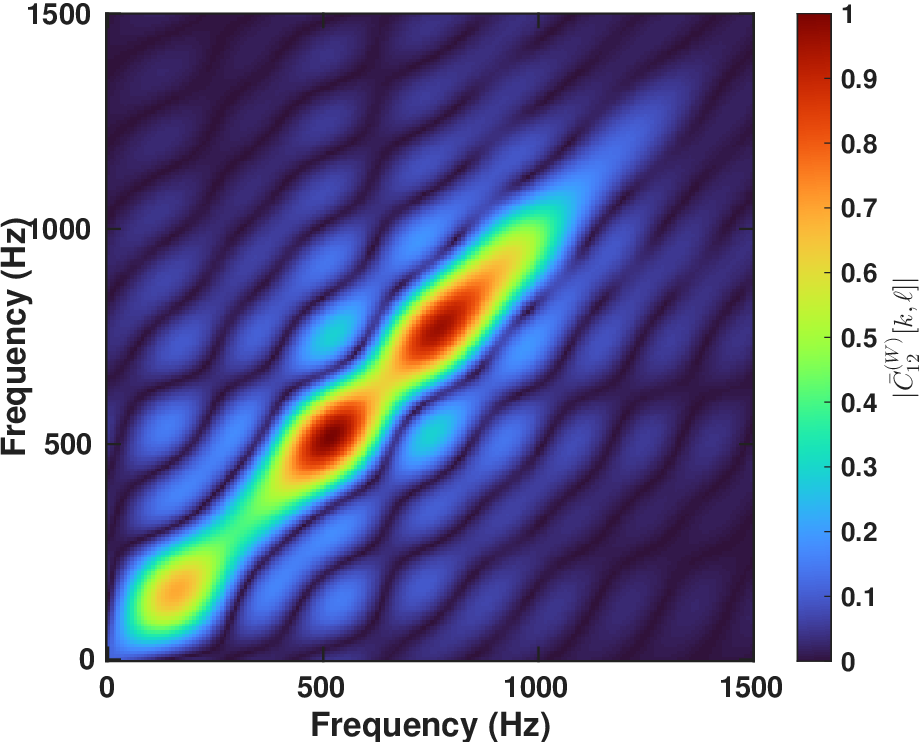}
        \caption{$W=50$.}
        \label{fig:windowed_csd_ism_pair12_w50}
    \end{subfigure}
    \hfill
    \begin{subfigure}[t]{0.32\textwidth}
        \centering
        \includegraphics[width=\linewidth]{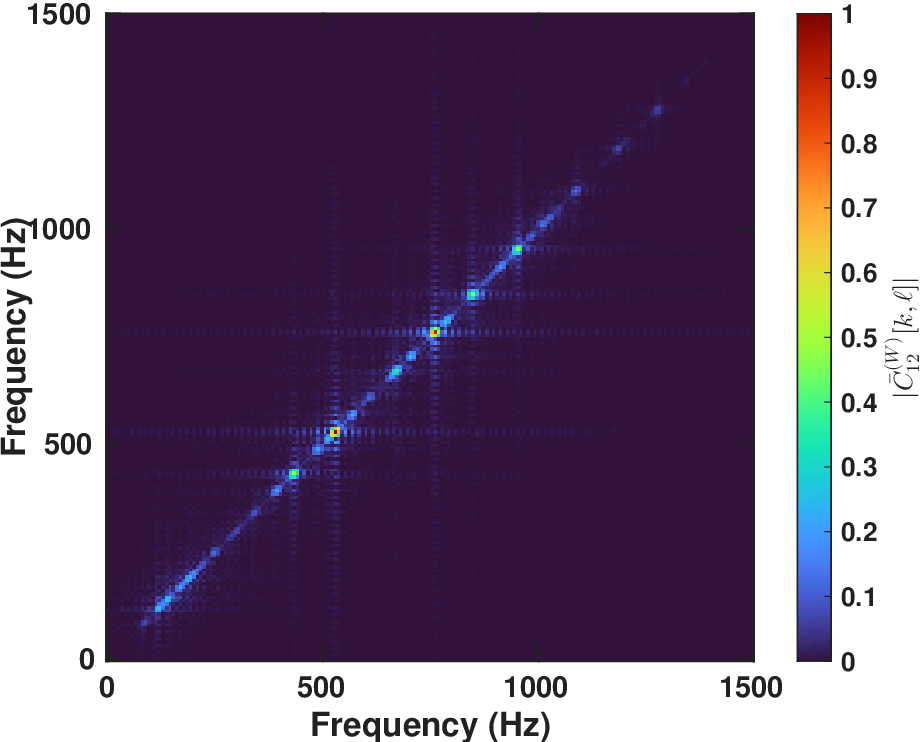}
        \caption{$W=500$.}
        \label{fig:windowed_csd_ism_pair12_w500}
    \end{subfigure}
    \caption{
    Normalized magnitude of the finite-window cross-frequency covariance between microphones \(m=1\) and \(m'=2\) in the simulated image-source room. Specifically, the plotted quantity is
    \(|\bar C_{12}^{(W)}[k,\ell]|\), where
    \(C_{12}^{(W)}[k,\ell]=\mathbb E[U_{1,k}[n]U_{2,\ell}[n]^*]\)
    and \(\bar C_{12}^{(W)}\) is the max-normalized version.
    For short windows, substantial off-diagonal energy is present, showing that the independent frequency approximation is inaccurate.
    }
    \label{fig:windowed_csd_ism_pair12}
\end{figure*}

\section{Sound Field Reconstruction}
\label{sec:reconstruction}

\def\bmuU{\boldsymbol{\mu}_u}
\def\bmuY{\boldsymbol{\mu}_y}
\def\bmuYW{\boldsymbol{\mu}_{y,W}}
\def\buPred{\hat{\bu}^{-}}
\def\byPred{\hat{\by}^{-}}
\def\byPredW{\hat{\by}^{-}_W}
\def\bA{\mathbf{A}}
\def\bR{\mathbf{R}}
Here, the reconstruction of the sound field is formulated as the maximization of the {\em posterior} distribution $p(\bu[n]\mid\bY[n])$. 
Under the Gaussian source model, the stacked prediction and observation vectors are jointly Gaussian. Hence, the maximum {\em a posteriori} estimator, posterior mean, and LMMSE estimator coincide.
Specifically, the joint distribution of the measurements and the sound field in the reconstruction positions is given by \cite{rasmussen2005gaussian}
\begin{align}
  \begin{bmatrix}\bu[n]\\ \by[n]\end{bmatrix}
  &\sim
  \cN\!\left(
    \begin{bmatrix}\mathbf{0}\\ \mathbf{0}\end{bmatrix},\;
    \begin{bmatrix}
      \bK_{uu} & \bK_{uy}\\
      \bK_{yu} & \bK_{yy}+\sigma^2\bI_{MW}
    \end{bmatrix}
  \right),
  \label{eq:joint_gaussian}
\end{align}
where $\bK_{yu}=\bK_{uy}^\top$,
\begin{align}
  \by[n] = \vecop(\bY[n])\in\RR^{MW},
  \label{eq:vec_defs}
\end{align}
with $\vecop(\cdot)$ denoting the vectorization operator, and
\begin{align}
  [\bK_{uu}]_{p,p'}
  &=
  C(\hat{\br}_p,\hat{\br}_{p'};0),
  \\
  [\bK_{yy}]_{(m,w),(m',w')}
  &=
  C(\br_m,\br_{m'};w'-w),
  \label{eq:Kyy}
  \\
  [\bK_{uy}]_{p,(m,w)}
  &=
  C(\hat{\br}_p,\br_m;w),
  \label{eq:Kuy}
\end{align}
where \(m,m'=1,\ldots,M\) and \(w,w'=0,\ldots,W-1\).
Furthermore, the Gaussian posterior is given by
\begin{align}
    p(\bu[n]\mid\by[n])=\cN\!\big(\bu[n];\,\hat{\bu},\,\bSigma_{u\mid y}\big),
  \label{eq:posterior_density}
\end{align}
where the posterior mean and covariance are
\begin{align}
  \hat{\bu}
  &= \bK_{uy}\,\big(\bK_{yy}+\sigma^2\bI_{MW}\big)^{-1}\,\by[n],  \label{eq:post_mean} \\
  \bSigma_{u|y}
  &=
  \bK_{uu}
  - \bK_{uy}
    \Big(\bK_{yy} + \sigma^2\bI_{MW}
    \Big)^{-1}\bK_{yu},
  \label{eq:post_var}
\end{align}
respectively. 
For fixed geometry and covariance parameters, the matrix factorization and the reconstruction filter can be precomputed. Nevertheless, the precomputation cost scales as \(O((MW)^3)\), the memory scales as \(O((MW)^2)\), and the online filtering cost scales as \(O(PMW)\) per time sample. Reducing the number of retained spatio-temporal samples from \(MW\) to \(K\), where $K \ll MW$, therefore notably reduces both offline factorization cost and online filtering cost. 
We will in the following section examine how this can be done.

%=========================================================
\section{Optimal Spatio-Temporal Sampling}
\label{sec:optimal_sampling}
\def\bZ{\mathbf{Z}}
% Show how the measurement model changes with the z. 
We proceed to consider the problem of selecting the $K$ spatio-temporal samples (out of the measured $MW$) that minimizes the reconstruction variance in \eqref{eq:post_var}. In practice, the resulting selection scheme could be pre-computed offline for a fixed geometry and target region to reduce the online filtering cost. The ideas presented here are related to the optimal sensor placement problem considered in \cite{verburg2024optimal}, but generalized to the spatio-temporal setting.

Let
$\mathbf{z}\in\{0,1\}^{MW}$ denote the masking vector, where $z_i=1$ means that the
$i$th spatio-temporal sample is used. The masked measurements are thus given by 
\begin{equation}
    \tilde{\by}[n] = \bZ \by[n],
\end{equation}
where $\bZ=\diag(\mathbf{z})$. We introduce a fixed budget of $K$ spatio-temporal samples, typically determined by the allowed computational complexity of the system, such that 
\begin{align}
  \mathbf{1}^\top \mathbf{z} = K.
  \label{eq:sampling_budget}
\end{align}
The posterior covariance of the reconstructed sound field given the measurements $\tilde{\by}[n]$ is similarly to \eqref{eq:post_var} given by
\begin{align}
  \bSigma_{u|\tilde{y}}(\mathbf{z})
  =
  \bK_{uu}
  - \bK_{uy}\mathbf{Z}
    \Big(
      \mathbf{Z}\bK_{yy}\mathbf{Z} + \sigma^2\bI_{MW}
    \Big)^{-1}
    \mathbf{Z}\bK_{yu}.
  \label{eq:sampling_posterior_cov}
\end{align}
The optimal spatio-temporal selection is formulated as the minimization of the sum of the posterior reconstruction variance, given by 
\begin{align}
  \mathbf{z}^\star
  \in
  \argmin_{\mathbf{z}\in\{0,1\}^{MW}}
  \ &\tr\!\big(\bSigma_{u|\tilde{y}}(\mathbf{z})\big)
  \label{eq:sampling_problem}\\
  &\text{s.t. }
  \mathbf{1}^\top \mathbf{z} = K. \nonumber
\end{align}
However, evaluating the cost for all $\binom{MW}{K}$ combinations of $K$ selections among the set of $MW$ samples is typically computationally intractable. Instead, inspired by the ideas in \cite{verburg2024optimal}, we first solve a relaxed problem to obtain a pruned subset of the $MW$ samples for which a greedy search is computationally tractable.   
The relaxed problem is obtained by relaxing the binary selection vector by $0\le z_i\le 1$.
%In the case where 
When $z_i\geq\varepsilon$, for all $i=1,\dots,MW$ and $\varepsilon>0$, the identity
\[
\mathbf Z
(\mathbf Z\mathbf K_{yy}\mathbf Z+\sigma^2\mathbf I)^{-1}
\mathbf Z
=
(\mathbf K_{yy}+\sigma^2\mathbf Z^{-2})^{-1},
\]
allows the masked posterior covariance to be written as an equivalent heteroscedastic-noise design problem,  where the lower bound \(\varepsilon\) ensures that \(\mathbf Z^{-2}\) is well defined. It may be noted that a small $z_i$ corresponds to a large effective noise variance, suggesting the relaxed design problem on the set
\begin{align}
  \mathcal{H}_K
  =
  \left\{
    \mathbf{z}\in\RR^{MW} :
    \mathbf{1}^\top\mathbf{z}=K,\;
    \varepsilon\le z_i\le 1
  \right\},
  \label{eq:hypersimplex_sampling}
\end{align}
where the objective in \eqref{eq:sampling_problem} is given on the convenient form
\begin{align}
  \phi(\mathbf{z})
  =
  \tr\!\big(\bK_{uu}\big)
  -
  \tr\!\Big(
    \big(
      \bK_{yy} + \sigma^2\mathbf{Z}^{-2}
    \big)^{-1}
    \bK_{yu}\bK_{uy}
  \Big).
  \label{eq:sampling_relaxed_objective}
\end{align}
This allows the relaxed problem to be formulated as
\begin{align}
  \min_{\mathbf{z}\in\RR^{MW}}
  \ & \phi(\mathbf{z}) \quad
  %\\
  \text{s.t. }
  %& 
  \mathbf{z}\in\mathcal{H}_K.
  \label{eq:sampling_relaxed_problem}
\end{align}
which may be efficiently minimized using projected-gradient steps with a backtracking line search. 
The derivation of the gradient and the projection onto $\mathcal{H}_K$ is given in Appendix~\ref{app:projected_gradient}. 

Finally, the relaxed solution $\hat{\mathbf{z}}$ is generally not binary, so it is used as an
importance score for a reduced search. 
We keep the candidates associated with the $\lceil\rho K\rceil$ largest candidates in \(\hat{\mathbf z}\) with $\rho\geq1$, and then run a forward greedy minimization of \eqref{eq:sampling_problem} on this reduced set until $K$ indices have been chosen.
The resulting greedy stage still optimizes the exact posterior-variance criterion, while
the relaxed stage prunes the search space. Note that the greedy search can be implemented using computationally efficient low-rank updates similar to the updates described in \cite{verburg2024optimal}.

%=========================================================
\section{Numerical experiments}
\label{sec:num_experiments}
We evaluate the proposed reconstruction and sample-selection methods using both simulated and measured sound fields. The overall aim of these experiments is to understand the effect of the causal windowed observations, understand the robustness of the proposed method with respect to parameter settings, and to understand the computational gains from the optimal sampling scheme. The excitation signal is band-limited white Gaussian noise in the range \(70\)--\(1000\) Hz. For the image source simulations, we convolve the excitation signal with RIRs simulated using the implementation in \cite{habets2006room}, with room dimensions $3.0\times4.0\times2.5$ m, reflection parameter $0.5$, sampling rate $f_s=8$ kHz, and speed of sound $c=343$ m/s. The microphone array is circular with $M=8$ microphones, radius $0.10$ m, and center $(1.5,1.3,1.2)$ m. The reconstruction region is a circular planar region centered at the array center, with radius $0.05$ m and grid spacing $0.01$ m, corresponding to $P=81$ validation positions. The source position is randomly sampled within the room for each Monte Carlo realization of the experiment. We also use a diffuse free-field simulation with the same microphone and reconstruction geometry. In this simulation, independent realizations of the excitation signals are assigned to $1000$ uniformly distributed directions on a sphere of radius $5$ m, and propagated using the free-field Green's function. If nothing else is mentioned, additive white noise is added to the measurements to achieve an SNR of $20$  dB.

\begin{figure}[t]
    \centering
    \includegraphics[width=\linewidth]{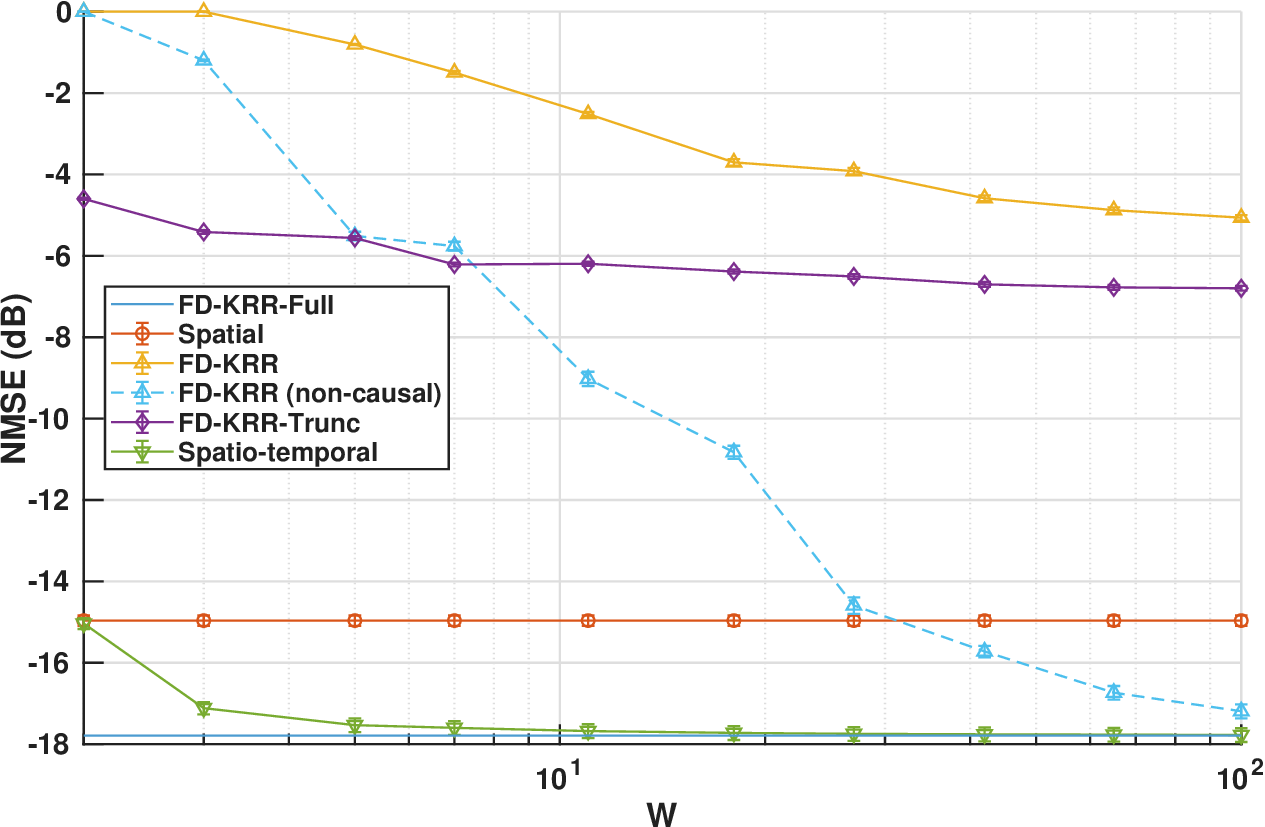}
    \caption{Illustration of the NMSE as a function of the window length $W$ for simulated diffuse-field data. The non-causal finite-window FD-KRR method use windows of length $2W-1$, while the causal methods use length $W$. }
    \label{fig:nmse_vs_winlen_diffuse}
\end{figure}

The measured experiments use the same excitation signal, but with measured RIRs from the DTU dataset %presented in 
\cite{fernandez2021reconstruction}, using the microphones available from the spherical array~1 as observations and the microphones from the linear array~1 as validation points (see Figure~\ref{fig:measured_geometry_dtu019} for an illustration).
The measured RIRs are downsampled to $8$ kHz. For each Monte Carlo realization, M=50 microphones among the 310 spherical-array microphone channels are sampled uniformly without replacement, while all $P=152$ measured linear array positions are used as validation positions.

\begin{figure}[t]
    \centering
    \includegraphics[width=\linewidth]{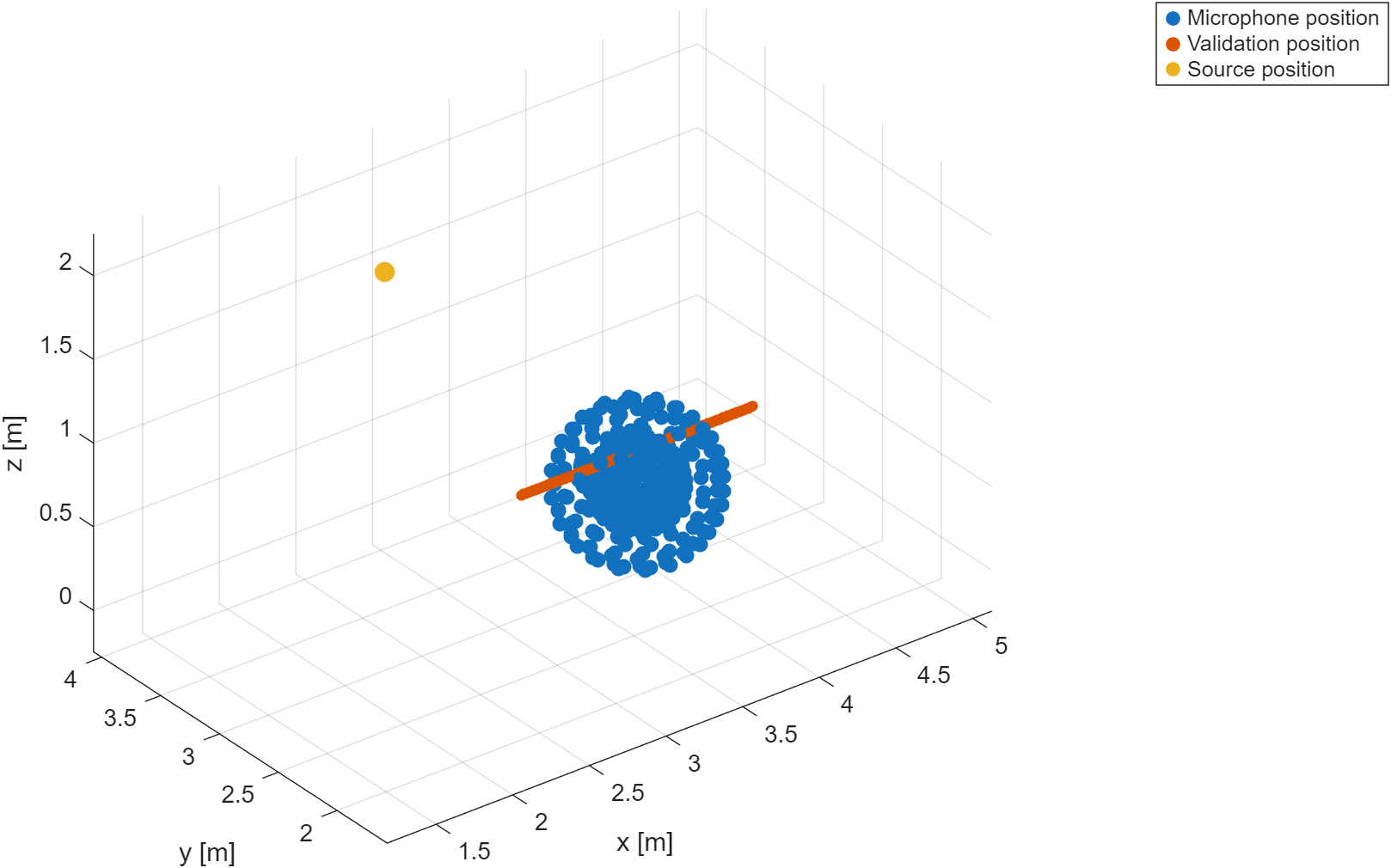}
    \caption{Geometry of the measured DTU setup. Spherical-array microphones are used as candidate observations and the linear array is used as the validation and prediction set.}
    \label{fig:measured_geometry_dtu019}
\end{figure}

In all experiments, the source spectrum used in the covariance model is band-limited and white over $70$--$1000$ Hz. Since the posterior mean depends on the ratio \(\sigma^2/q\), we fix \(q=1\) and tune \(\sigma^2\).
The covariance integral is evaluated on a Fibonacci grid with $Q=1000$ points on a sphere of radius $a=5.0$ m, unless otherwise stated. The source sphere $\SS_a$ is centered in the center of the microphone array.
The methods are evaluated using signals of $T=2000$ temporal samples. For RIR-based simulations, the first $800$ convolution samples are discarded before extracting the observation segment.

To understand the properties of the proposed approach, denoted Spatio-temporal, we compare it with several reference methods. As an offline performance reference, we first include a full-signal frequency domain kernel ridge regression estimator, denoted FD-KRR-Full, using the diffuse kernel in Proposition~\ref{prop:diffuse_freq_limit}, similarly to \cite{brunnstrom2025time,ueno2018kernel}. 
Since this estimator uses the full observation record and processes the resulting high-resolution frequency bins independently, the effect of finite-window cross-frequency covariance is expected to be negligible, 
and the method is used as an offline full-record performance reference.
We then consider three finite-window frequency domain baselines. 
The causal FD-KRR baseline applies the same bin-wise diffuse-kernel reconstruction to the length $W$ DFT of the causal window available at each time step, and the reconstructed time domain window is obtained by inverse DFT. The current sample is then taken as the output. In this case, the frequency resolution is determined by the window length and is therefore very coarse for small \(W\). 
To separate the effect of finite-windowing from the effect of causality, we also include a similar non-causal variant, denoted FD-KRR (non-causal), which uses a centered window of length \(2W-1\). 
In addition, motivated by the causal implementation used in \cite{Koyama2021}, we include FD-KRR-Trunc. This estimator is obtained by first constructing a high-resolution frequency domain kernel estimator using a grid of $T$ frequencies, transforming it to a time-domain FIR filter, and then truncating the filter so that only the causal finite-window part of the observation is retained. 
FD-KRR-Trunc is included to separate the grid resolution effect in the FD-KRR method from the effect of truncating a frequency domain estimator to a causal FIR window.
Finally, we include a purely spatial baseline, denoted Spatial, obtained from \eqref{eq:post_mean} using only the zero-lag spatial covariance, corresponding to \(W=1\). This baseline is included as a reference method to illustrate the benefit of exploiting temporal correlations in addition to the instantaneous spatial covariance.

\begin{figure}[t]
    \centering
    \includegraphics[width=\linewidth]{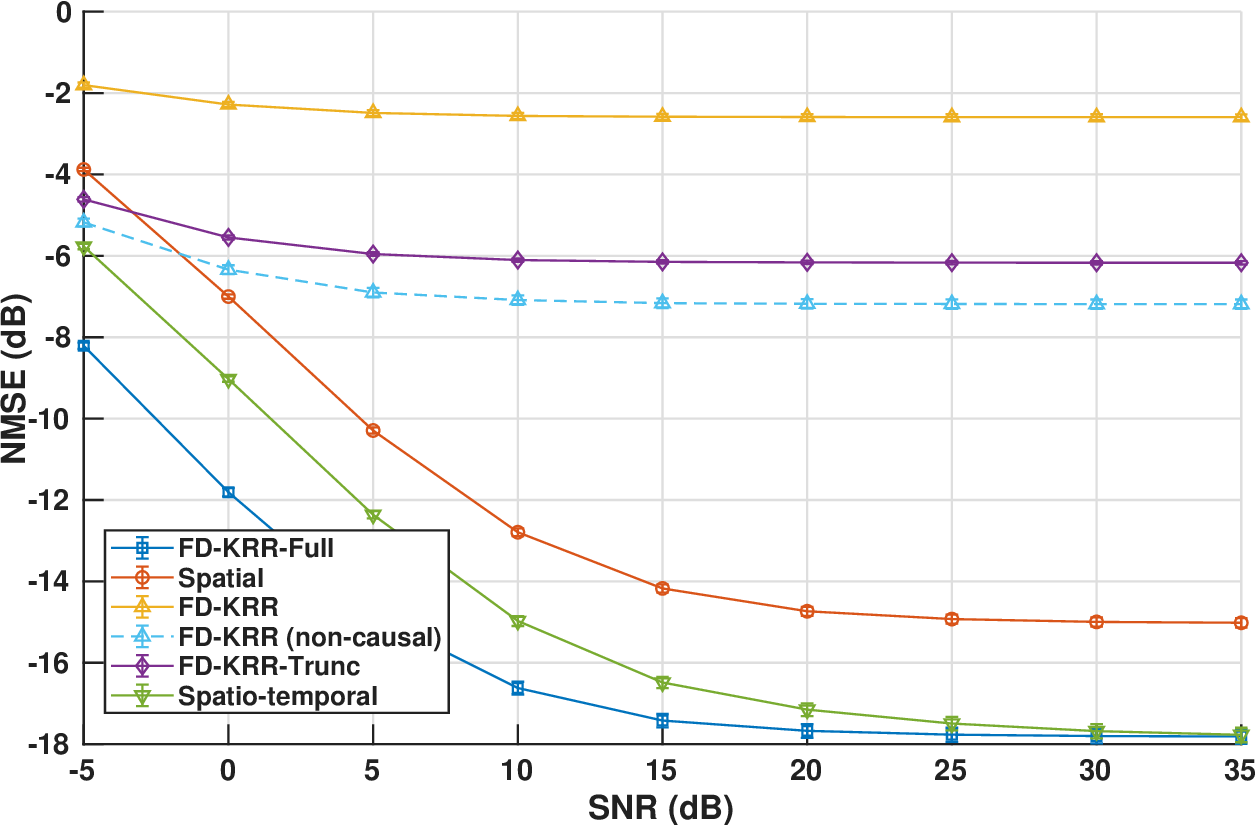}
    \caption{Reconstruction NMSE as a function of SNR for the simulated diffuse-field experiment. Causal methods use \(W=10\).}
    \label{fig:nmse_vs_snr_diffuse}
\end{figure}

The measurement noise variance $\sigma^2$ is in general unknown; here, we select $\sigma^2$ from 20 logarithmically spaced values in $[10^{-9},1]$ independently for each method and Monte Carlo realization. 
The selection is performed by leave-one-microphone-out cross-validation, minimizing the squared prediction error of the held-out microphone. The validation data are never used during this procedure - the held-out microphone is from the circular array for the simulated experiments and from the spherical array for the experiments on the measured data. For the frequency domain methods, the same criterion is summed over the selected passband frequency bins. 
The NMSE is computed as $\lVert \hat{\bu}-\bu\rVert_F^2/\lVert\bu\rVert_F^2$ after excluding the first and last $200$ samples to avoid signal start and end effects. Reported intervals are $95\%$ confidence intervals computed over $50$ Monte Carlo realizations for all numerical experiments.

\begin{figure}[t]
    \centering
    \includegraphics[width=\linewidth]{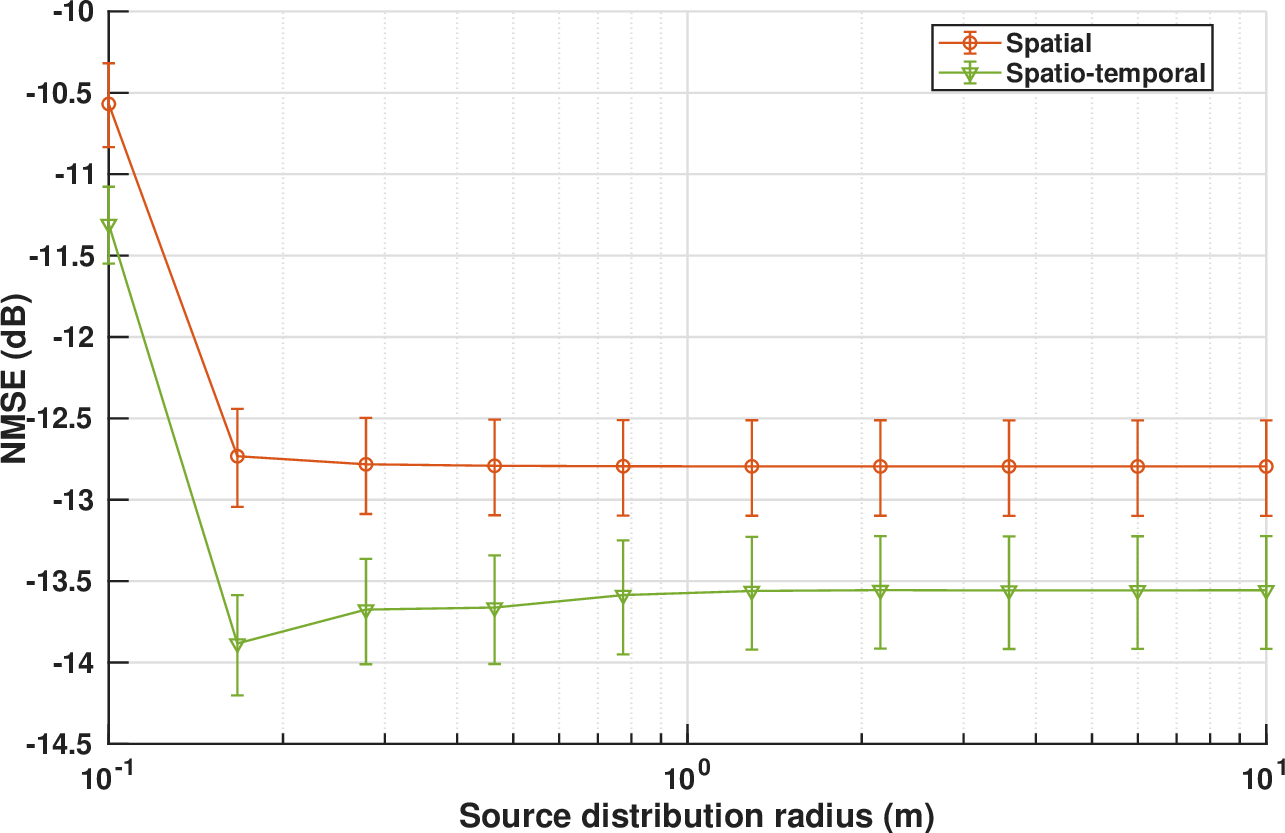}
    \caption{Sensitivity of the proposed estimator to the assumed source-sphere radius \(a\).}
    \label{fig:nmse_vs_source_radius}
\end{figure}

\subsection{Evaluation on simulated data}
The effect of finite windowing is illustrated in Figure~\ref{fig:windowed_csd_ism_pair12}, which shows the normalized magnitude
\(|\bar C_{12}^{(W)}[k,\ell]|\) of the finite-window cross-frequency covariance between microphone 1 and 2.
For short windows, the windowed observations contain substantial cross-correlation between frequency bins. 
This is the finite-window effect that motivates the spatio-temporal formulation presented herein, i.e.,
it is not reasonable to assume independence between frequencies 
when using short windows.
To make this point clear for the sound field reconstruction problem, Figure~\ref{fig:nmse_vs_winlen_diffuse} shows the corresponding NMSE of the reconstruction in the idealistic diffuse free-field simulation. The window-length sweep uses ten values between $W=2$ and $W=100$. 
The non-causal estimator use \(2W-1\) samples and are therefore not matched to the causal estimators in sample count, but are instead included to illustrate the effect of access to future observations in addition to the causal window.
The causal spatio-temporal estimator approaches the full-observation frequency domain reference already for $W=5$, whereas the finite-window frequency domain estimators require substantially longer windows to obtain comparable performance. 
The rapid convergence of the causal estimator in the diffuse simulation is expected because the data are generated from the same diffuse prior family used by the estimator and because the prediction region lies inside the relatively small microphone aperture.
This supports the interpretation that the relevant limitation is not the frequency domain representation itself, but the independence approximation induced when short causal windows are processed frequency by frequency.

The advantage of short causal windows comes with increased sensitivity to measurement noise, since the estimator uses fewer temporal observations. 
This effect is shown in Figure~\ref{fig:nmse_vs_snr_diffuse}, where the diffuse-field experiment is repeated for SNR values from $-5$ dB to $35$ dB. 
The SNR is defined as
\[
\mathrm{SNR}
=
10\log_{10}\left(\frac{\sigma_{\mathrm{sig}}^2}{\sigma_{\mathrm{add}}^2}\right),
\]
where \(\sigma_{\mathrm{sig}}^2\) is the variance of the clean microphone signal and \(\sigma_{\mathrm{add}}^2\) is the variance of the added white Gaussian noise. The noise parameter \(\sigma^2\) used in the estimator in \eqref{eq:post_mean} is here selected independently for each method, SNR condition, and Monte Carlo realization using leave-one-microphone-out cross-validation.
The causal spatio-temporal and causal FD-KRR methods use $W=10$, while the finite-window non-causal method use windows of length $19$.
At low SNR, the gap between the frequency domain reference method on the full signal and the causal spatio-temporal estimator increases, whereas the NMSE is similar for high SNR.

We also study the sensitivity to the radius of the source prior and the numerical quadrature resolution used to compute the covariance in Figures~\ref{fig:nmse_vs_source_radius} and~\ref{fig:nmse_vs_num_points_quadrature_resolution}. 
The source radius is varied over ten logarithmically spaced values between $0.1$ m and $10$ m, and the quadrature resolution is varied over ten values between $1$ and $100$ points. The results show that the covariance evaluation is numerically stable for moderate quadrature resolutions and that the reconstruction is not critically dependent on a finely tuned source radius. Although the reconstruction is essentially stable around \(Q=100\), we use \(Q=1000\) in the remaining experiments to make quadrature error negligible.

\begin{figure}[t]
    \centering
    \includegraphics[width=\linewidth]{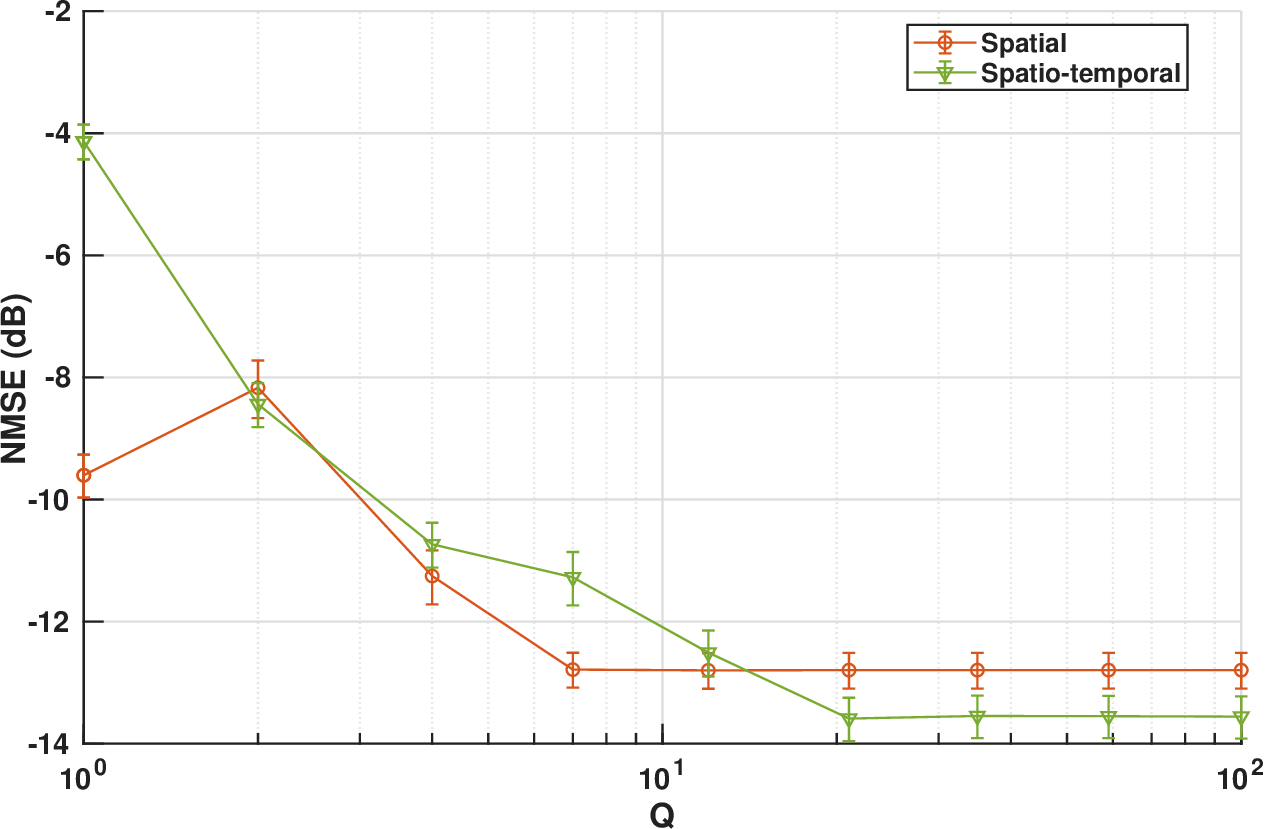}
    \caption{Sensitivity of the proposed estimator to the number of quadrature points \(Q\) used in \eqref{eq:C_quadrature}.}
    \label{fig:nmse_vs_num_points_quadrature_resolution}
\end{figure}

\subsection{Evaluation on measured data}
Next, we evaluate whether the same behavior is observed for measured room responses. 
Figure~\ref{fig:nmse_vs_winlen_measured_dtu} shows the NMSE as a function of window length for the measured dataset, using ten logarithmically spaced window settings between $W=2$ and $W=20$. 
The causal spatio-temporal formulation gives substantially lower reconstruction error than the finite-window frequency domain estimators. 
Although the error decays as a function of $W$, there is still a significant gap to the FD-KRR-Full reference at $W=20$. Since increasing $W$ further would shrink this gap at the cost of increasing the computational complexity, this further motivates the need for careful selection of the spatio-temporal observations.

Furthermore, we study how the reconstruction varies along the linear validation array in Figure~\ref{fig:validation_distance_measured_dtu}. 
The dashed markers indicate the transition from interpolation inside the spherical measurement aperture to extrapolation outside it. The causal spatio-temporal estimator follows the NMSE of the full-observation frequency domain reference closely both inside and outside the measured aperture, although the extrapolation region is more challenging. 
Since the posterior variance is used for the optimal spatio-temporal selection, it is relevant to understand how well calibrated the posterior variance of the model is. 
We illustrate this in Figure~\ref{fig:validation_distance_measured_dtu_causal_variance}, showing that the posterior variance is of the same order as the empirical error in the interpolation region, while becoming conservative in the extrapolation region.
This behavior is consistent with the transition from interpolation, where the prediction points are surrounded by observations, to extrapolation, where the reconstruction relies more strongly on the prior.
The plotted posterior variance is the diagonal of \eqref{eq:post_var} for the causal spatio-temporal estimator with $W=20$, averaged over the Monte Carlo microphone splits. It is compared with the empirical error variance
\begin{align}
    \hat v^{\mathrm{emp}}_p
    =
    \frac{1}{N_t}
    \sum_{n\in\mathcal{T}}
    e_p[n]^2,
    \label{eq:empirical_error_variance}
\end{align}
where \(e_p[n]=\hat u_p[n]-u_p[n]\), $\mathcal{T}$ is the set of time indices used for validation, and $N_t=|\mathcal{T}|$.
Overall, the predictive variance is on the same order of magnitude as the empirical, suggesting that this objective is meaningful for the spatio-temporal sampling problem further evaluated in the next section.

\begin{figure}[t]
    \centering
    \includegraphics[width=\linewidth]{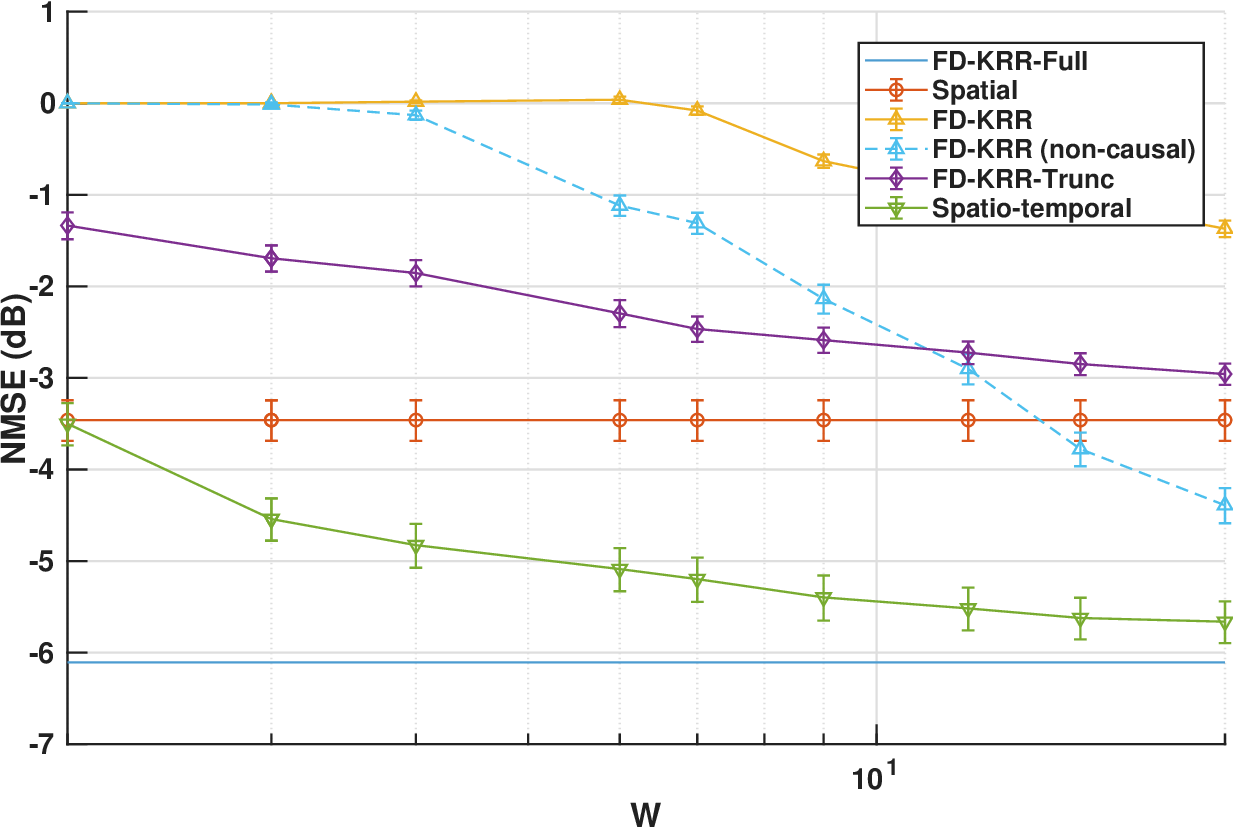}
    \caption{Illustration of the NMSE as a function of the window length $W$ for the measured dataset. The non-causal finite-window FD-KRR methods use windows of length $2W-1$, while the causal methods use length $W$.}
    \label{fig:nmse_vs_winlen_measured_dtu}
\end{figure}

\begin{figure*}[t]
    \centering
    \begin{subfigure}[t]{0.49\textwidth}
        \centering
        \includegraphics[width=\linewidth]{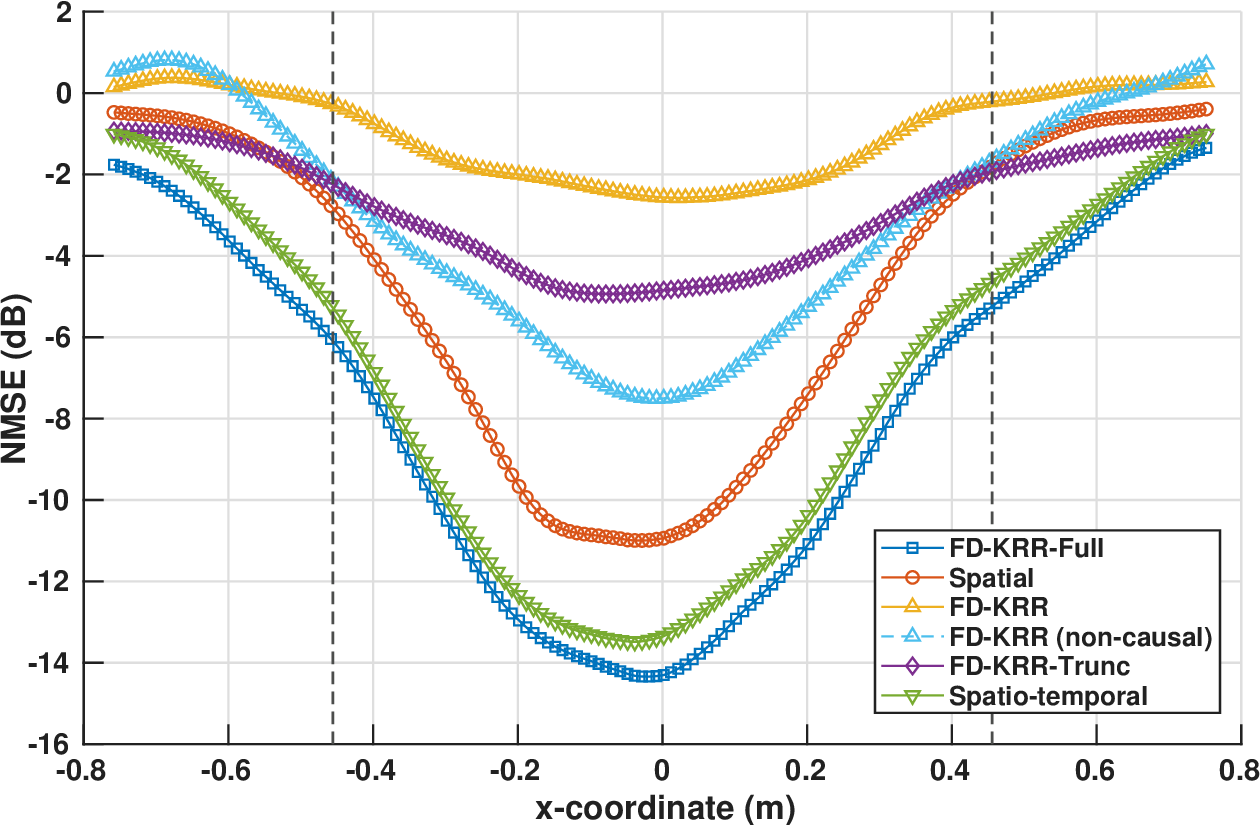}
        \caption{}
        \label{fig:validation_distance_measured_dtu_nmse}
    \end{subfigure}
    \hfill
    \begin{subfigure}[t]{0.49\textwidth}
        \centering
        \includegraphics[width=\linewidth]{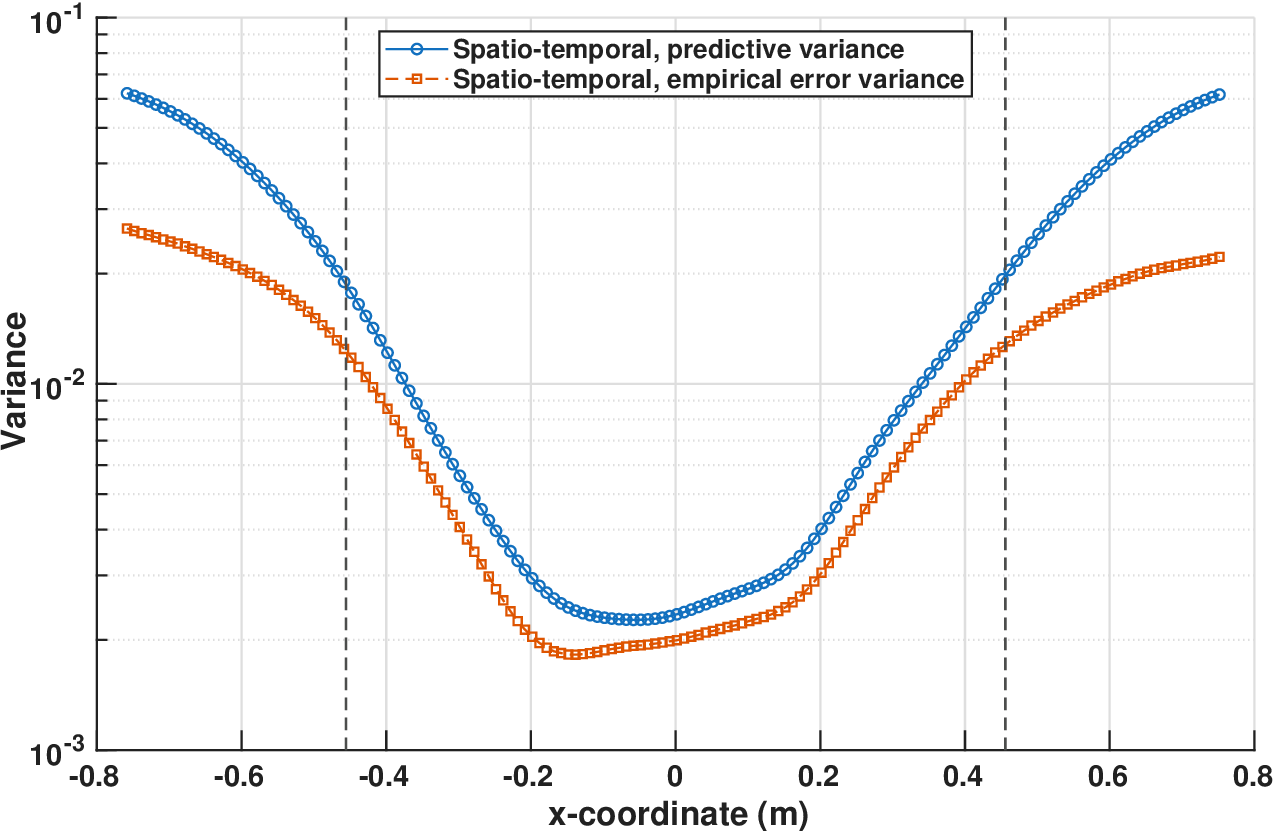}
        \caption{}
        \label{fig:validation_distance_measured_dtu_causal_variance}
    \end{subfigure}
    \caption{Reconstruction along the linear validation array illustrated in Figure~\ref{fig:measured_geometry_dtu019}. Dashed vertical lines mark the spherical-array aperture boundary separating interpolation and extrapolation. The NMSE is illustrated in a), and the posterior variance of the proposed spatio-temporal causal estimator is compared to the empirical prediction variance in b). }
    \label{fig:validation_distance_measured_dtu}
\end{figure*}

\subsection{Spatio-temporal sample selection}
Finally, we evaluate the optimal spatio-temporal sampling formulation on the measured DTU data. The candidate horizon is $W=20$, so the full causal estimator uses $K=MW=1000$ possible spatio-temporal samples.  
We set $\varepsilon$ to $10^{-9}$, $\rho$ to $1.2$, and run the projected gradient solver detailed in Appendix \ref{app:projected_gradient} for 100 iterations.
The prediction points in the posterior-variance objective \eqref{eq:sampling_problem} are all positions on the measured linear validation array. Note that only the coordinates of the linear array positions are used in the objective and not the measured pressure values at those positions.
We compare the method described in Section~\ref{sec:optimal_sampling} with two baselines, recent and random selection.
The recent baseline selects all microphones at the most recent time lags until fewer than \(M\) samples remain, and then selects the remaining samples chronologically from the next lag. This corresponds to a typical naive implementation in a real-time system. The second baseline selects samples uniformly at random without replacement for each Monte Carlo realization.
The qualitative behavior of the selected samples is shown in Figure~\ref{fig:nmse_vs_k_measured_selection_patterns} for $K=50$, $K=189$, and $K=1000$. 
Note that the selected patterns also indicate which microphones contribute most strongly to the reconstruction, suggesting a possible extension to joint array-geometry and temporal-sampling design, similar to the purpose in \cite{verburg2024optimal}.
Note also that there seems to be a periodic structure of the number of selected microphones as a function of lag for the proposed method in Figure \ref{fig:nmse_vs_k_measured_selection_patterns}. We confirm this behavior in Figure \ref{fig:num_samples_per_lag}, which illustrates the average number of selected microphones as a function of lag based on $50$ Monte Carlo realizations of the experiment. For the $12$ first lags, there is a periodic structure with a period of about $4$, corresponding to a frequency of $2$kHz, i.e., the lowest sampling frequency necessary to represent the assumed band-width $70-1000$ Hz of the source distribution in \eqref{eq:Phi_bp}. This indicates a strong influence of the spectral model on the optimal sampling scheme, consistent with recent results in \cite{juhlin2023optimal}.
Figure~\ref{fig:nmse_vs_k_measured} shows the resulting NMSE as a function of the sampling budget, with $K$ varied over ten logarithmically spaced budgets from $50$ to $1000$. 
For a fixed reconstruction error, the proposed selection requires substantially fewer samples than the random and recent baselines. 
For example, for an NMSE of approximately -4 to -5 dB, the same accuracy can be obtained when only half of the observations are used.
If \(K=\alpha MW\), the factorization, memory, and precomputed online filtering costs scale as \(\alpha^3\), \(\alpha^2\), and \(\alpha\), respectively, relative to the full estimator. Thus, using half of the spatio-temporal observations reduces the factorization cost by approximately a factor of eight, the memory cost by a factor of four, and the online filtering cost by a factor of two.

\begin{figure*}[t]
    \centering
    \begin{subfigure}[t]{0.32\textwidth}
        \centering
        \includegraphics[width=\linewidth]{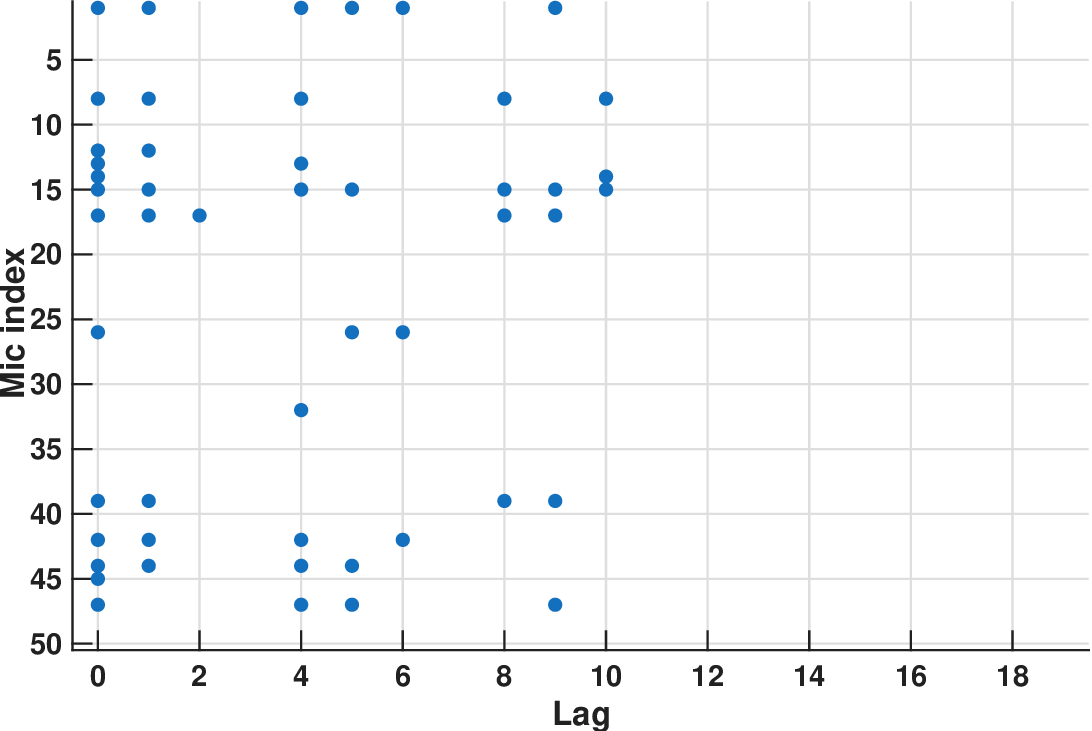}
        \caption{Proposed, $K=50$.}
        \label{fig:nmse_vs_k_measured_selection_patterns_proposed_k50}
    \end{subfigure}
    \hfill
    \begin{subfigure}[t]{0.32\textwidth}
        \centering
        \includegraphics[width=\linewidth]{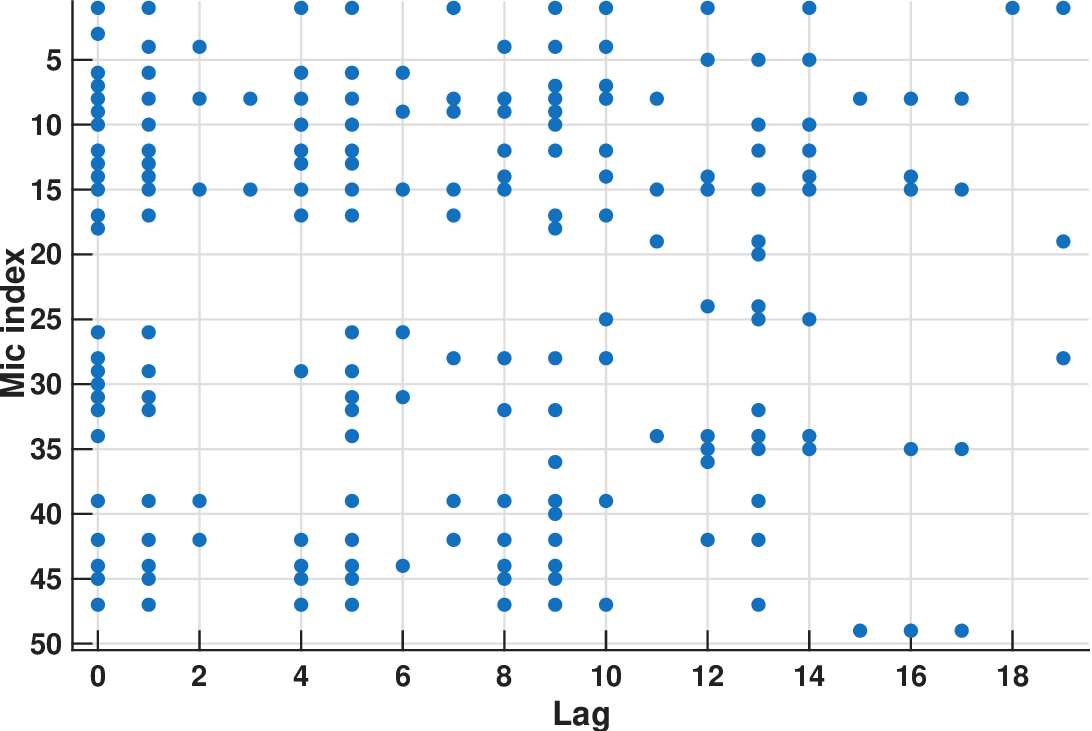}
        \caption{Proposed, $K=189$.}
        \label{fig:nmse_vs_k_measured_selection_patterns_proposed_k189}
    \end{subfigure}
    \hfill
    \begin{subfigure}[t]{0.32\textwidth}
        \centering
        \includegraphics[width=\linewidth]{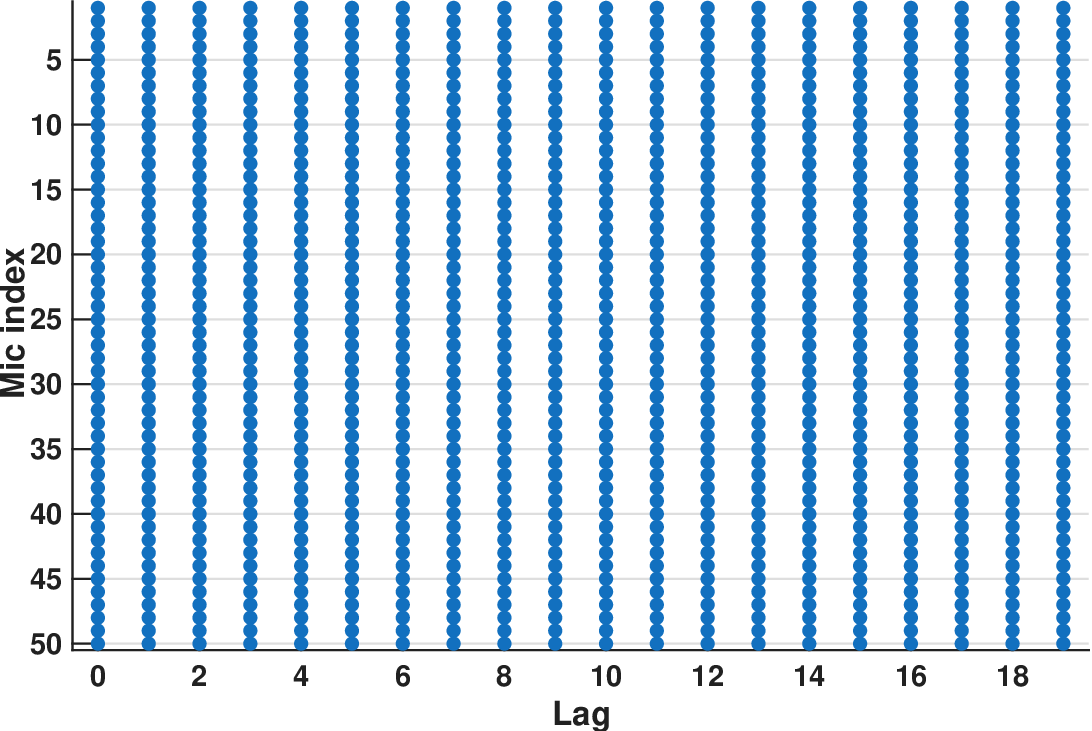}
        \caption{Proposed, $K=1000$.}
        \label{fig:nmse_vs_k_measured_selection_patterns_proposed_k1000}
    \end{subfigure}

    \vspace{0.6em}

    \begin{subfigure}[t]{0.32\textwidth}
        \centering
        \includegraphics[width=\linewidth]{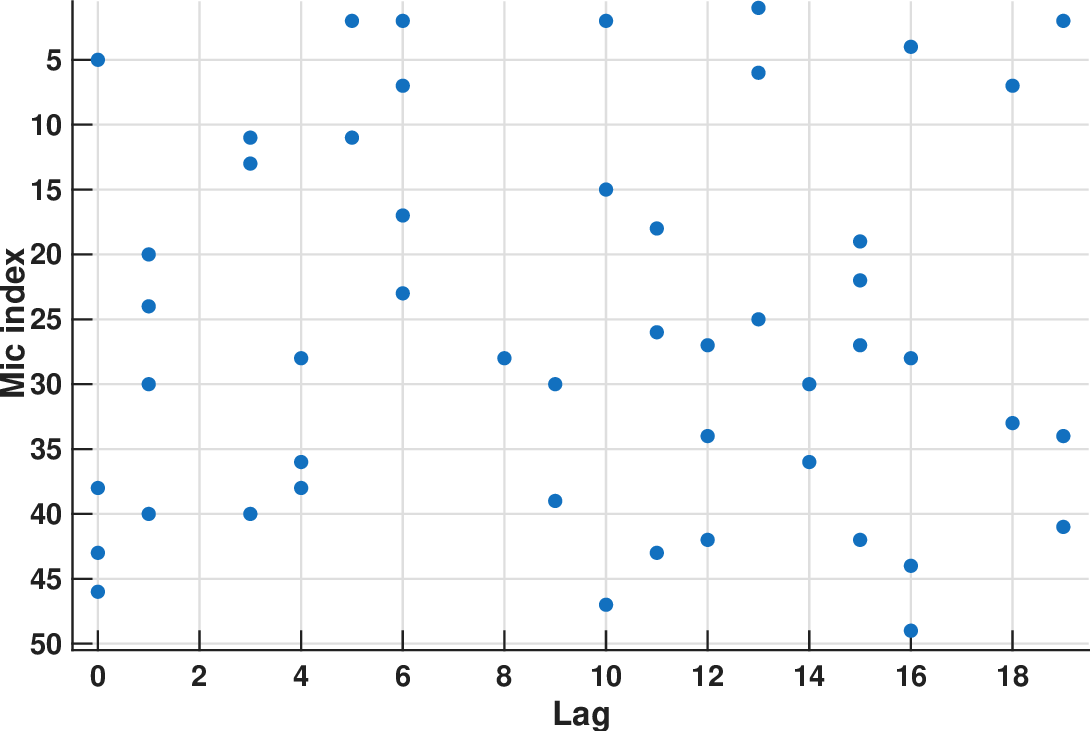}
        \caption{Random, $K=50$.}
        \label{fig:nmse_vs_k_measured_selection_patterns_random_k50}
    \end{subfigure}
    \hfill
    \begin{subfigure}[t]{0.32\textwidth}
        \centering
        \includegraphics[width=\linewidth]{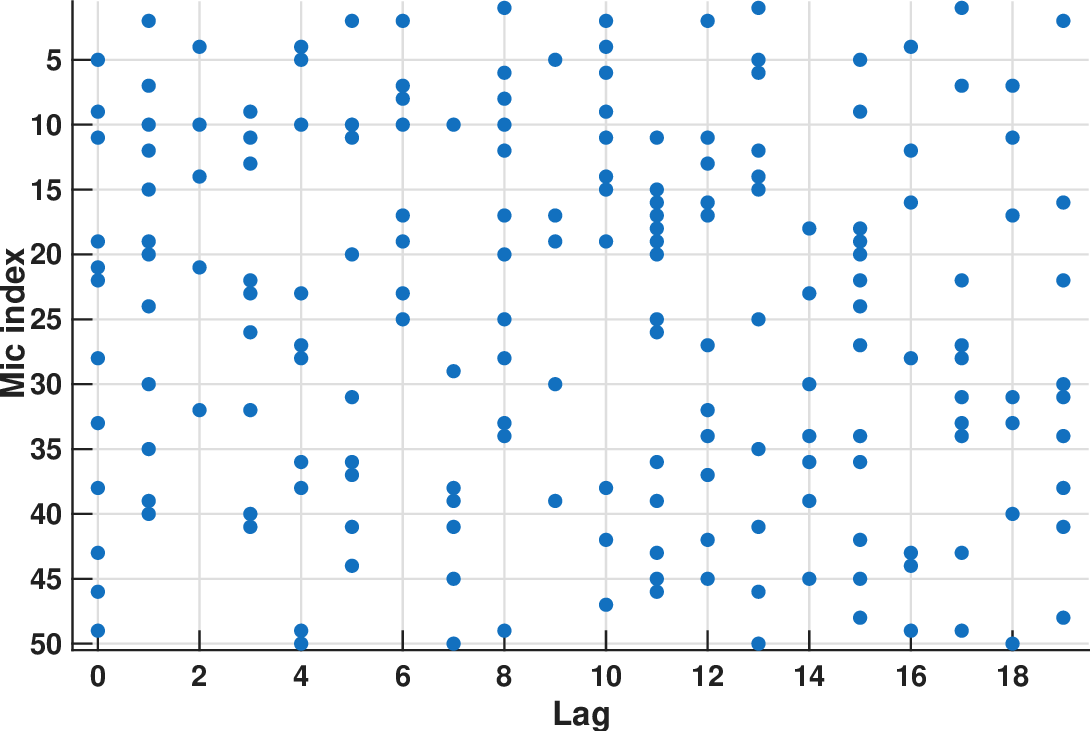}
        \caption{Random, $K=189$.}
        \label{fig:nmse_vs_k_measured_selection_patterns_random_k189}
    \end{subfigure}
    \hfill
    \begin{subfigure}[t]{0.32\textwidth}
        \centering
        \includegraphics[width=\linewidth]{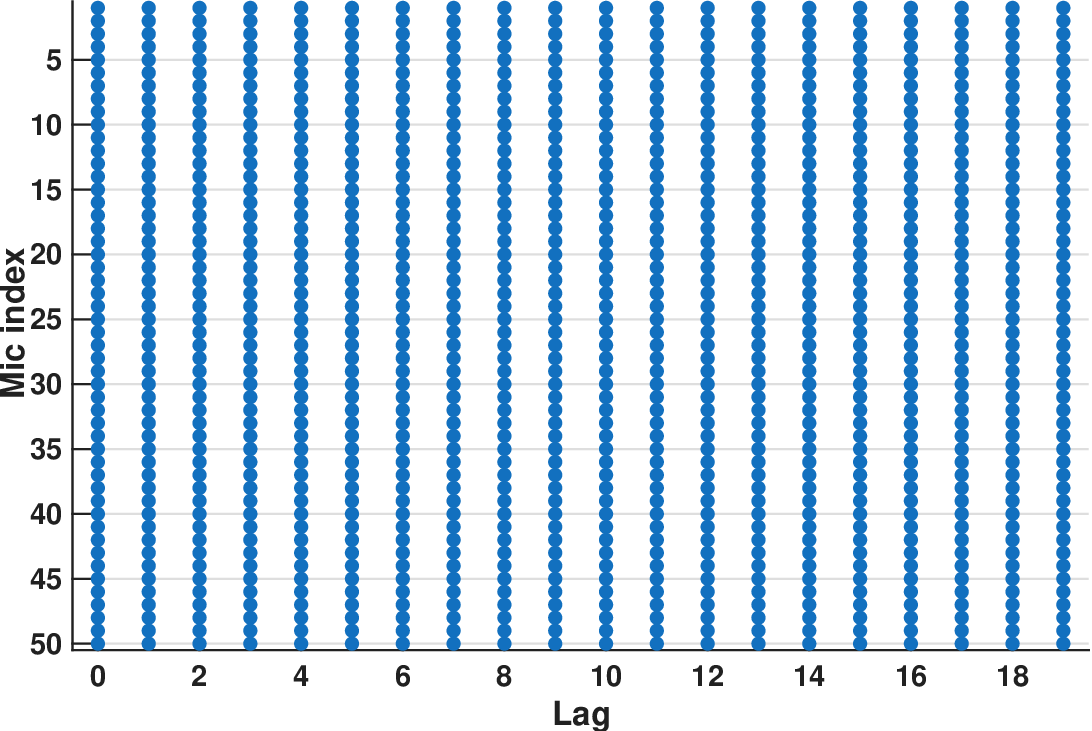}
        \caption{Random, $K=1000$.}
        \label{fig:nmse_vs_k_measured_selection_patterns_random_k1000}
    \end{subfigure}

    \vspace{0.6em}

    \begin{subfigure}[t]{0.32\textwidth}
        \centering
        \includegraphics[width=\linewidth]{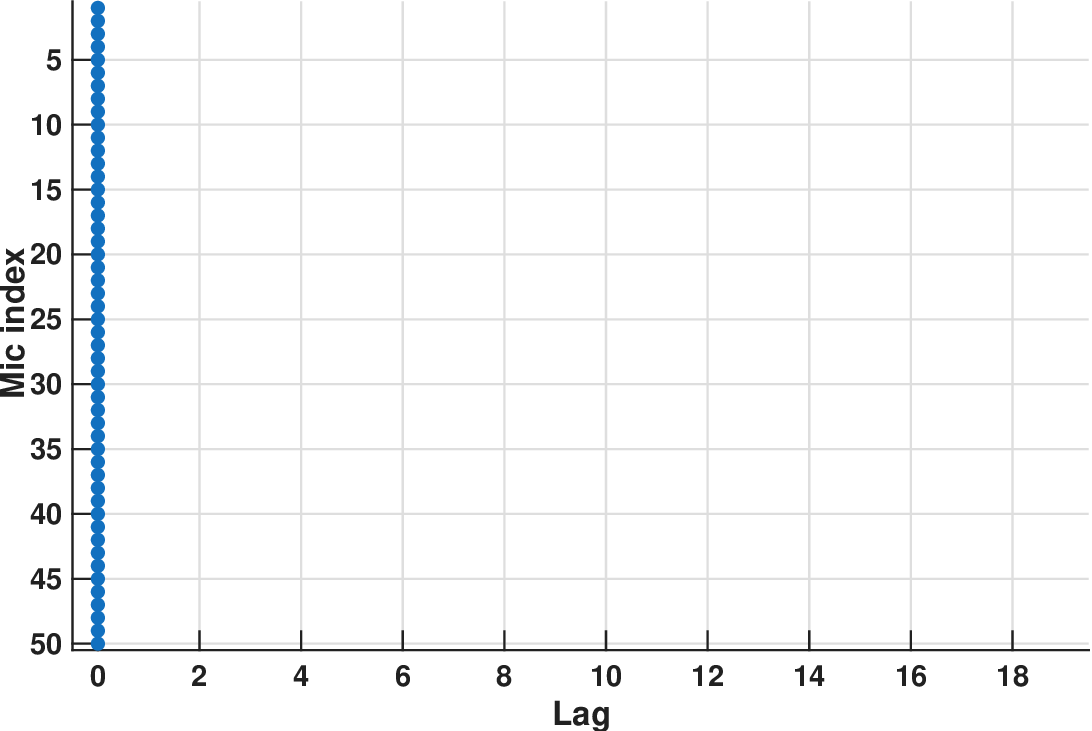}
        \caption{Recent, $K=50$.}
        \label{fig:nmse_vs_k_measured_selection_patterns_recent_k50}
    \end{subfigure}
    \hfill
    \begin{subfigure}[t]{0.32\textwidth}
        \centering
        \includegraphics[width=\linewidth]{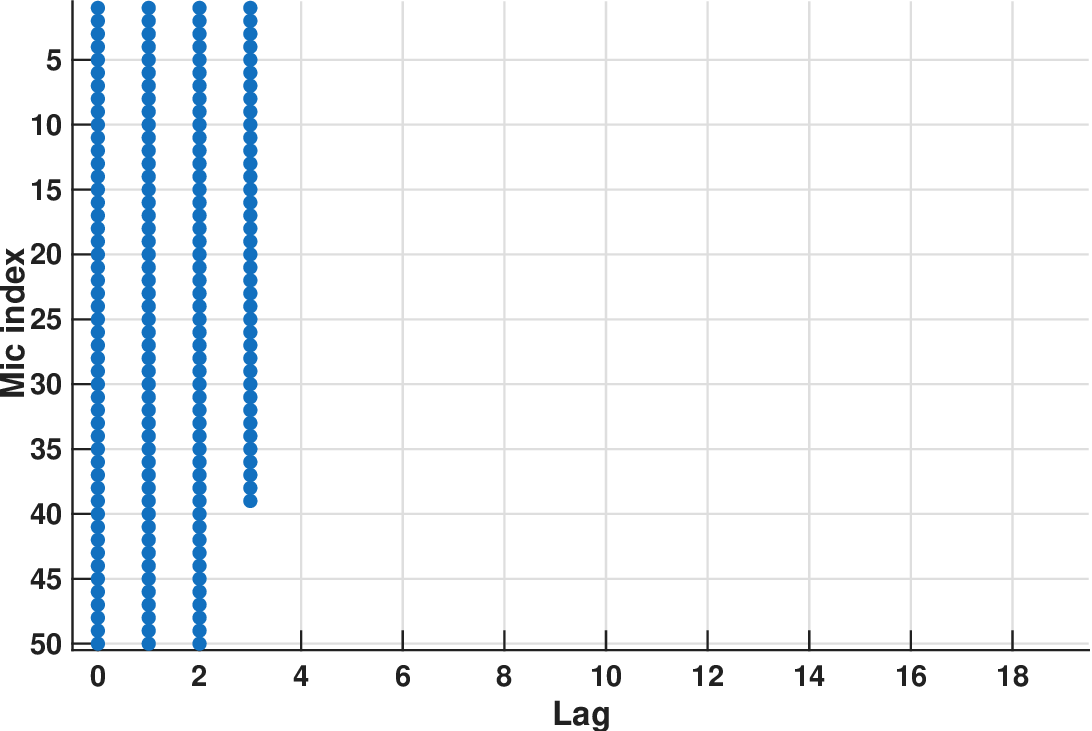}
        \caption{Recent, $K=189$.}
        \label{fig:nmse_vs_k_measured_selection_patterns_recent_k189}
    \end{subfigure}
    \hfill
    \begin{subfigure}[t]{0.32\textwidth}
        \centering
        \includegraphics[width=\linewidth]{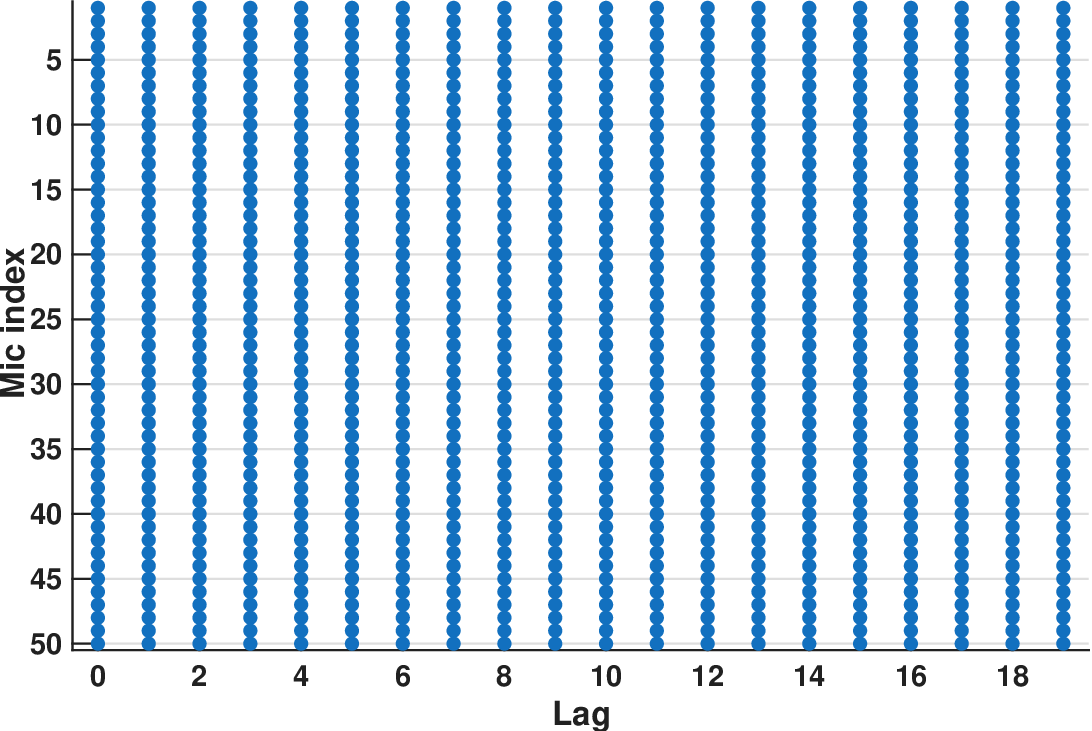}
        \caption{Recent, $K=1000$.}
        \label{fig:nmse_vs_k_measured_selection_patterns_recent_k1000}
    \end{subfigure}
    \caption{Illustration of the spatio-temporal sample selection patterns for the proposed, random, and recent strategies. Candidate observations are spherical-array microphone samples over the causal horizon $W=20$.}
    \label{fig:nmse_vs_k_measured_selection_patterns}
\end{figure*}

\begin{figure}[t]
    \centering
    \includegraphics[width=\linewidth]{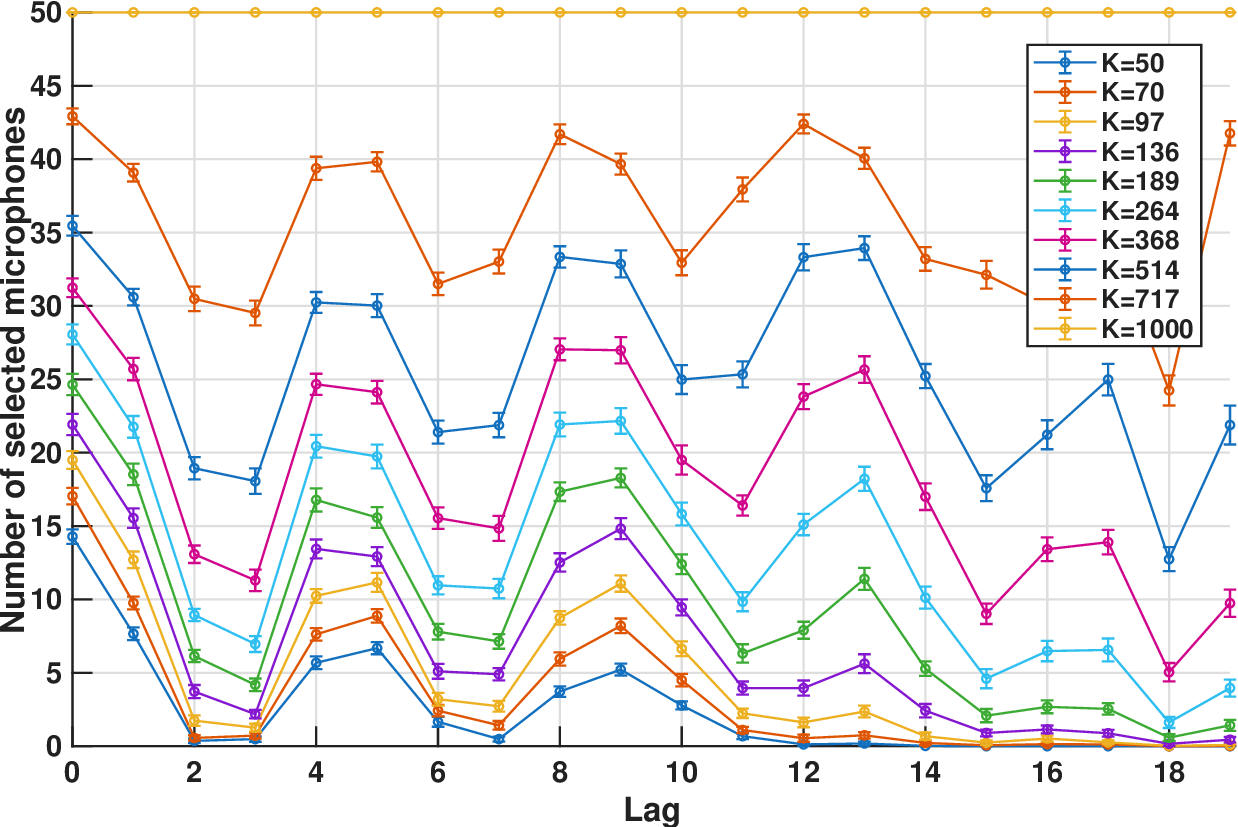}
    \caption{Illustration of the number of selected microphones per lag when using the proposed spatio-temporal selection method for 10 different values of $K$. There is a consistent oscillatory pattern that is related to the assumed spectral structure of the source model in \eqref{eq:Phi_bp}.}
    \label{fig:num_samples_per_lag}
\end{figure}

\begin{figure}[t]
    \centering
    \includegraphics[width=\linewidth]{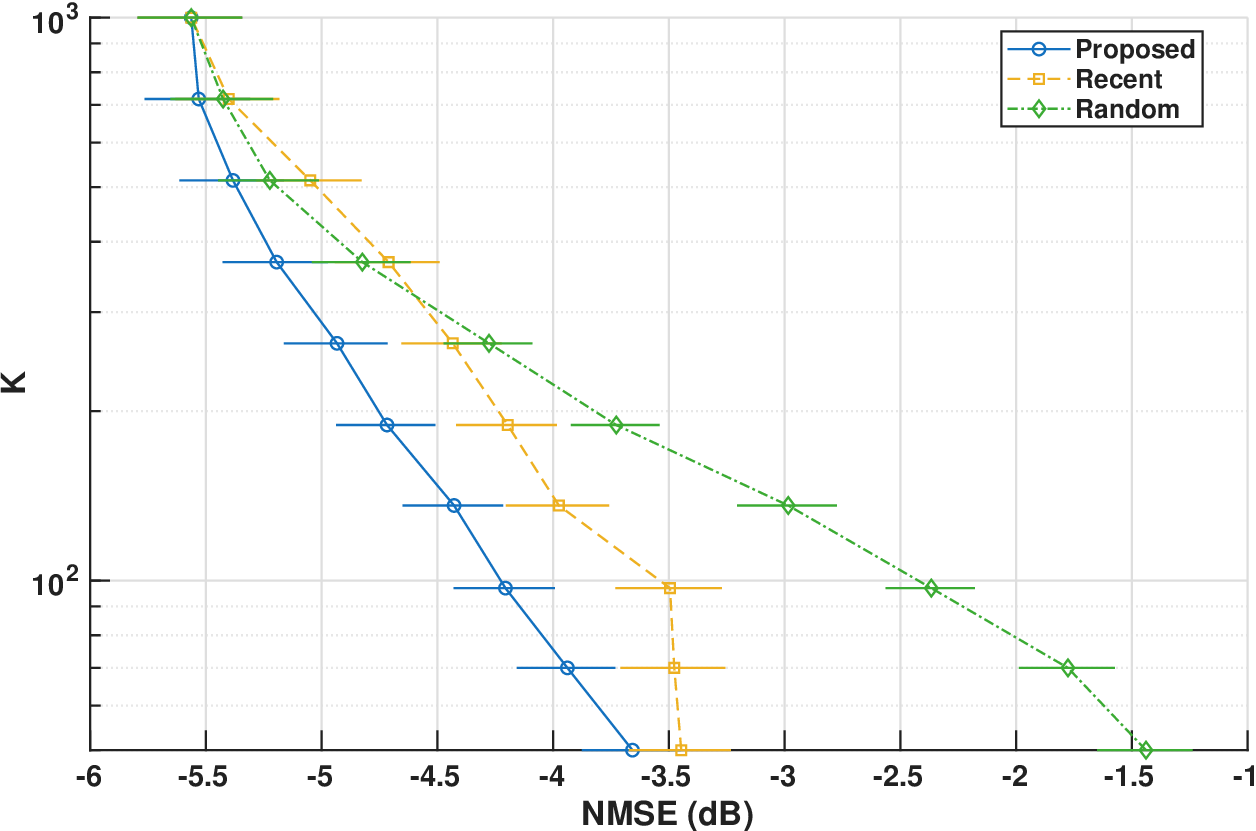}
    \caption{Illustration of the number of observations $K$ needed to obtain a specific reconstruction NMSE evaluated on the DTU dataset. Error bars show $95\%$ confidence intervals over Monte Carlo microphone splits. Substantially fewer samples are needed for the same reconstruction accuracy when using the proposed sampling scheme.}
    \label{fig:nmse_vs_k_measured}
\end{figure}

%=========================================================

\section{Conclusion}
\label{sec:conclusion}
We formulated causal finite-window sound field reconstruction as a spatio-temporal LMMSE estimation problem with a covariance induced by the wave equation driven by a temporally stationary stochastic source distribution. For a spatially white source on a surrounding sphere, the covariance is numerically tractable and converges in the far-field limit to the classical diffuse-field coherence model. 
The formulation retains the temporal correlations present in short causal windows and explains why frequency domain estimators that process finite-window frequency bins independently can lose reconstruction accuracy.
Experiments on simulated and measured sound fields showed improved short window reconstruction compared with finite-window frequency domain baselines. Finally, the proposed sample selection scheme based on the posterior variance achieved similar reconstruction accuracy using substantially fewer spatio-temporal observations, reducing factorization, memory, and online filtering costs. These results support the use of causal spatio-temporal covariance models for low-latency sound-field reconstruction in real-time control and rendering applications.

%=========================================================
\appendices
\section{Derivation of the stationary spatio-temporal covariance}
\label{app:u_covariance_derivation}
The spatio-temporal covariance of the sound field in \eqref{eq:u_covariance} is shown to be given on the form \eqref{eq:Cu_integral}. We first emphasize the importance of the stationarity in time of the source covariance in \eqref{eq:source_cov}. Using \eqref{eq:source_cov} in \eqref{eq:u_covariance} yields
\begin{align}
  &C_u\!\big((t,\br),(t',\br')\big)
  = \frac{q}{16\pi^2}
  \int_{\RR^3}\int_{\RR^3}
  \frac{\wt(\br_0,\br_1)}
  {\|\br-\br_0\|_2\,\|\br'-\br_1\|_2} \nonumber \\
  &\times\int_{\RR}\int_{\RR}
  \delta\!\Big(t-\tau-\frac{\|\br-\br_0\|_2}{c}\Big)
  \delta\!\Big(t'-\tau'-\frac{\|\br'-\br_1\|_2}{c}\Big) \nonumber\\ 
  &\quad \qquad \qquad \qquad  \qquad \times \kappa(\tau-\tau') d\tau\,d\tau'\,d\br_0\,d\br_1,
\end{align}
where the integrals involving the temporal dimensions can be further simplified as a result of the assumed temporal stationarity to
\begin{align}
  C_u&\!\big((t,\br),(t',\br')\big)
  = \frac{q}{16\pi^2}
  \int_{\RR^3}\int_{\RR^3}
  \frac{\wt(\br_0,\br_1)}
  {\|\br-\br_0\|_2\,\|\br'-\br_1\|_2}
  \nonumber\\
  &\times\kappa\!\Bigg(
    (t-t')
    -
    \frac{\|\br-\br_0\|_2-\|\br'-\br_1\|_2}{c}
  \Bigg)
  \,d\br_0\,d\br_1.
  \label{eq:Cu_after_time_stationarity}
\end{align}
Note that this form of the covariance function could already be numerically tractable if the source distribution were restricted to a bounded spatial region. In the following, we consider the special case of a source distribution supported on the spherical surface \(\SS_a\).
Furthermore, we specialize to the spatially white source model in \eqref{eq:source_cov_2}, i.e.,
\begin{align}
  \wt(\br_0,\br_1)=\delta(\br_0-\br_1),
  \qquad \br_0,\br_1\in\SS_a,
\end{align}
such that \eqref{eq:Cu_after_time_stationarity} becomes
\begin{align}
  C_u&\!\big((t,\br),(t',\br')\big)
  =
  \frac{q}{16\pi^2}
  \int_{\SS_a}
  \frac{1}
  {\|\br-\br_0\|_2\,\|\br'-\br_0\|_2} \nonumber\\
  &\times\kappa\!\Bigg(
    (t-t')
    -
    \frac{\|\br-\br_0\|_2-\|\br'-\br_0\|_2}{c}
  \Bigg)\,dS(\br_0),
\end{align}
which is \eqref{eq:Cu_integral}.

\section{Connection to frequency domain diffuse models}
\label{app:freq_domain_connection}

Taking the Fourier transform of \eqref{eq:Cu_integral} with respect to $\tau$ yields
\begin{align}
  &S_{u,a}(\br,\br';\omega)
  =
  \int_{\RR} R_u(\br,\br';\tau)e^{-j\omega\tau}\,d\tau
  \nonumber\\
  &=
  \frac{q}{16\pi^2}
  \int_{\SS_a}
  \frac{
    \int_{\RR}
    \kappa\!\left(
      \tau - \frac{\|\br-\br_0\|_2-\|\br'-\br_0\|_2}{c}
    \right)e^{-j\omega\tau}\,d\tau
  }{
    \|\br-\br_0\|_2\,\|\br'-\br_0\|_2
  }
  \,dS(\br_0).
\end{align}
Using the Wiener-Khinchin theorem (see for example \cite{hayes1996statistical})
\begin{align}
  \int_{\RR}\kappa(\tau-\delta)e^{-j\omega\tau}\,d\tau
  =
  \Phi(\omega)e^{-j\omega\delta},
\end{align}
and the model of the PSD in \eqref{eq:Phi_bp}, it follows that
\begin{align}
  S_{u,a}(\br,\br';\omega)
  =
  \frac{q\,\Phi(\omega)}{16\pi^2}
  \int_{\SS_a}
  \frac{
    e^{-j\frac{\omega}{c}\left(\|\br-\br_0\|_2-\|\br'-\br_0\|_2\right)}
  }{
    \|\br-\br_0\|_2\,\|\br'-\br_0\|_2
  }
  \,dS(\br_0).
  \label{eq:Su_surface_appendix}
\end{align}
Let $\br_0=a\hat{\bs}$ with $\hat{\bs}\in \mathbb S^2$, so that
$dS(\br_0) = a^2 d\Omega(\hat{\bs})$, where \(d\Omega(\hat{\bs})\) denotes the solid-angle measure on \(\mathbb S^2\). For a fixed interior point $\br,\br'$ and $a\to\infty$,
\begin{align}
  \|\br-a\hat{\bs}\|_2 = a-\hat{\bs}^\top\br + \mathcal{O}(a^{-1}),
\end{align}
and therefore
\begin{align}
  \|\br-a\hat{\bs}\|_2-\|\br'-a\hat{\bs}\|_2
  &=
  -\hat{\bs}^\top(\br-\br') + \mathcal{O}(a^{-1}),
  \\
  \frac{a^2}{
    \|\br-a\hat{\bs}\|_2\,\|\br'-a\hat{\bs}\|_2
  }
  &\to 1.
\end{align}
Thus,
\begin{align}
  S_{u,a}(\br,\br';\omega)
  \xrightarrow[a\to\infty]{}
  \frac{q\,\Phi(\omega)}{16\pi^2}
  \int_{\mathbb S^2}
  e^{j\frac{\omega}{c}\hat{\bs}^\top(\br-\br')}
  \,d\Omega(\hat{\bs}).
\end{align}
Let $d=\|\br-\br'\|_2$. Using eq. (10.1.14) in \cite{Abramowitz1972}, stating that
\begin{align}
  \int_{\mathbb S^2}
  e^{j\frac{\omega}{c}\hat{\bs}^\top(\br-\br')}
  \,d\Omega(\hat{\bs})
  =
  4\pi\,
  \frac{\sin\!\left(\frac{\omega}{c}\|\br-\br'\|_2\right)}
       {\frac{\omega}{c}\|\br-\br'\|_2},
\end{align}
 yields
\begin{align}
  S_{u,a}(\br,\br';\omega)
  \xrightarrow[a\to\infty]{}
  \frac{q\,\Phi(\omega)}{4\pi}\,
  \frac{\sin\!\left(\frac{\omega}{c}d\right)}
       {\frac{\omega}{c}d},
\end{align}
which proves Proposition~\ref{prop:diffuse_freq_limit}.
Evaluating the same limit at $\br=\br'$ yields
\begin{align}
  S_{u,a}(\br,\br;\omega)
  \xrightarrow[a\to\infty]{}
  \frac{q\,\Phi(\omega)}{4\pi},
\end{align}
and therefore the normalized coherence %is
\begin{align}
  \gamma(\br,\br';\omega)
  = 
  \frac{S_{u,a}(\br,\br';\omega)}
  {\sqrt{S_{u,a}(\br,\br;\omega)\,S_{u,a}(\br',\br';\omega)}}
  =
  \frac{\sin\!\left(\frac{\omega}{c}\|\br-\br'\|_2\right)}
       {\frac{\omega}{c}\|\br-\br'\|_2}.
\end{align}

\section{Finite-window frequency domain covariance}
\label{app:windowed_csd_appendix}

In this appendix, we show how a finite causal observation window induces coupling between
frequency bins even when the underlying process is stationary. Let
\(u[n]=u(t_n,\br)\) and \(u'[n]=u(t_n,\br')\) be mean-zero jointly stationary discrete-time
processes with cross-covariance
\begin{align}
  C_{uu'}[\ell]
  =
  \mathbb E[u[n]u'[n-\ell]^*],
\end{align}
and cross-spectral density
\begin{align}
  S_{uu'}(\vartheta)
  =
  \sum_{\ell\in\mathbb Z}C_{uu'}[\ell]e^{-j\vartheta\ell},
  \qquad \vartheta\in[-\pi,\pi].
\end{align}
For a rectangular causal window of length \(W\), define the DFT coefficients in chronological order as
\begin{align}
  U_k[n]
  &=
  \sum_{r=0}^{W-1}
  u[n-W+1+r]e^{-j\vartheta_k r},
  \\
  U'_\ell[n]
  &=
  \sum_{s=0}^{W-1}
  u'[n-W+1+s]e^{-j\vartheta_\ell s},
\end{align}
where \(\vartheta_k=2\pi k/W\). Then,
\begin{align}
  \mathbb E[U_k[n]U'_\ell[n]^*]
  =
  \sum_{r=0}^{W-1}\sum_{s=0}^{W-1}
  C_{uu'}[r-s]e^{-j\vartheta_k r}e^{j\vartheta_\ell s}.
\end{align}
Using
\begin{align}
  C_{uu'}[r-s]
  =
  \frac{1}{2\pi}\int_{-\pi}^{\pi}
  S_{uu'}(\vartheta)e^{j\vartheta(r-s)}\,d\vartheta,
\end{align}
and interchanging summation and integration yields
\begin{align}
  \mathbb E[U_k[n]U'_\ell[n]^*]
  &= \nonumber \\
  & \hspace{-15mm}
  \frac{1}{2\pi}\int_{-\pi}^{\pi}
  S_{uu'}(\vartheta)
  D_W(\vartheta_k-\vartheta)
  D_W^*(\vartheta_\ell-\vartheta)
  \,d\vartheta,
\end{align}
where
\begin{align}
  D_W(\vartheta)
  =
  \sum_{r=0}^{W-1}e^{-j\vartheta r}
\end{align}
is the Dirichlet kernel. Therefore, the finite-window frequency domain covariance is
\begin{align}
  C_{uu'}^{(W)}[k,\ell]
  &=
  \mathbb E[U_k[n]U'_\ell[n]^*] \nonumber \\
  & \hspace{-8mm} =
  \frac{1}{2\pi}\int_{-\pi}^{\pi}
  S_{uu'}(\vartheta)
  D_W(\vartheta_k-\vartheta)
  D_W^*(\vartheta_\ell-\vartheta)
  \,d\vartheta.
\end{align}
For finite \(W\), the off-diagonal terms with \(k\neq \ell\) are generally nonzero. Hence, the finite-window frequency domain covariance does not separate into independent frequency bins. A frequency domain estimator that retained the full cross-frequency covariance would be equivalent to the corresponding time-domain estimator up to a unitary transformation. The loss considered herein therefore comes from the approximation from assuming independent frequency bins.

\section{Derivation of projected gradient updates }
\label{app:projected_gradient}

In this appendix, we derive the gradient of the relaxed objective in
\eqref{eq:sampling_relaxed_objective} and summarize the projected-gradient
update used to solve \eqref{eq:sampling_relaxed_problem}.
Define
\begin{align}
  \mathbf{M}(\mathbf{z})
  &=
  \bK_{yy} + \sigma^2 \mathbf{Z}^{-2}, \\ 
  \bA&=
  \bK_{yu}\bK_{uy},
\end{align}
so that the relaxed objective can be written as
\begin{align}
  \phi(\mathbf{z})
  =
  \tr(\bK_{uu})
  -
  \tr\!\Big(
    \mathbf{M}(\mathbf{z})^{-1}\bA
  \Big).
\end{align}
Since the first term is constant, only the second term contributes to the
gradient. 
Furthermore, for each component $z_i$, define
\begin{align}
  d_i(\mathbf{z}) = \sigma^2 z_i^{-2},
  \qquad
  d_i'(\mathbf{z}) = -2\sigma^2 z_i^{-3}.
\end{align}
As $\mathbf{M}(\mathbf{z})$ depends on $z_i$ only through its $i$th diagonal
entry, %we have
\begin{align}
  \frac{\partial \mathbf{M}}{\partial z_i}
  =
  d_i'(\mathbf{z})\,\be_i\be_i^\top,
\end{align}
where $\be_i$ is the $i$th canonical basis vector. Using the identity
\begin{align}
  \frac{\partial}{\partial z_i}\tr\!\big(\mathbf{M}^{-1}\mathbf{A}\big)
  =
  -\tr\!\left(
    \mathbf{M}^{-1}
    \frac{\partial \mathbf{M}}{\partial z_i}
    \mathbf{M}^{-1}\mathbf{A}
  \right),
\end{align}
it follows that
\begin{align}
  \frac{\partial \phi}{\partial z_i}
  &=
  d_i'(\mathbf{z})\,\be_i^\top \mathbf{C}(\mathbf{z}) \be_i
  \nonumber\\
  &=
  -2\sigma^2 z_i^{-3}\,[\mathbf{C}(\mathbf{z})]_{ii},
  \label{eq:sampling_gradient}
\end{align}
for $i=1,\dots,MW$, where
\begin{align}
  \mathbf{C}(\mathbf{z})
  &=
  \mathbf{M}(\mathbf{z})^{-1}
  \bA
  \mathbf{M}(\mathbf{z})^{-1}.
\end{align}
Given the gradient, one projected-gradient iteration is
\begin{align}
  \bar{\mathbf{z}}^{(j+1)}
  &=
  \mathbf{z}^{(j)}
  -
  \alpha_j \nabla \phi\!\left(\mathbf{z}^{(j)}\right),
  \\
  \mathbf{z}^{(j+1)}
  &=
  \Pi_{\mathcal{H}_K}\!\left(\bar{\mathbf{z}}^{(j+1)}\right),
  \label{eq:sampling_projected_gradient}
\end{align}
where $\alpha_j>0$ is chosen by backtracking line search.
The feasible set $\mathcal{H}_K$ in \eqref{eq:hypersimplex_sampling} is the
capped simplex
\begin{align}
  \mathcal{H}_K
  =
  \left\{
    \mathbf{z}\in\RR^{MW}:
    \mathbf{1}^\top \mathbf{z}=K,\;
    \varepsilon \le z_i \le 1
  \right\}.
\end{align}
Its Euclidean projection has the thresholding form
\begin{align}
  \big[\Pi_{\mathcal{H}_K}(\mathbf{v})\big]_i
  =
  \min\!\big\{1,\max\{\varepsilon,\,v_i-\tau\}\big\},
  \label{eq:sampling_projection}
\end{align}
where the scalar $\tau$ is selected such that
\begin{align}
  \sum_{i=1}^{MW}
  \min\!\big\{1,\max\{\varepsilon,\,v_i-\tau\}\big\}
  = K.
\end{align}
This projection can be computed efficiently by bisection on $\tau$.

\bibliographystyle{IEEEbib}
\bibliography{export}

@article{juhlin2023optimal,
  title={Optimal sensor placement for localizing structured signal sources},
  author={Juhlin, M. and Jakobsson, A.},
  journal={Signal Processing},
  volume={202},
  pages={108679},
  year={2023},
  publisher={Elsevier}
}

@book{hayes1996statistical,
  title={Statistical digital signal processing and modeling},
  author={Hayes, Monson H},
  address = {New York},
  year={1996},
  publisher={John Wiley \& Sons}
}

@article{nishida2022region,
  title={Region-restricted sensor placement based on {G}aussian process for sound field estimation},
  author={Nishida, T. and Ueno, N. and Koyama, S. and Saruwatari, H.},
  journal={IEEE Trans. Sig. Process.},
  volume={70},
  pages={1718--1733},
  year={2022},
  publisher={IEEE}
}

@inproceedings{ariga2020mutual,
  title={Mutual-information-based sensor placement for spatial sound field recording},
  author={Ariga, K. and Nishida, T. and Koyama, S. and Ueno, N. and Saruwatari, H.},
  booktitle={IEEE Int. Conf. Acoust., Speech, Signal Process.},
  pages={166--170},
  year={2020},
  month = {05},
  address = {Barcelona, Spain},
  organization={IEEE}
}

@article{garcia1997generation,
  title={Generation of zones of quiet using a virtual microphone arrangement},
  author={Garcia-Bonito, J. and Elliott, S. J. and Boucher, C. C.},
  journal={J. Acoust. Soc. Amer.},
  volume={101},
  number={6},
  pages={3498--3516},
  year={1997},
  publisher={Acoustical Society of America}
}

@article{zhang2018active,
  title={Active noise control over space: A wave domain approach},
  author={Zhang, J. and Abhayapala, T. D. and Zhang, W. and Samarasinghe, P. N. and Jiang, S.},
  journal={{IEEE} Trans. Audio, Speech, Lang. Process.},
  volume={26},
  number={4},
  pages={774--786},
  year={2018},
  publisher={IEEE}
}

@inproceedings{spors2007approach,
  title={An approach to massive multichannel broadband feedforward active noise control using wave-domain adaptive filtering},
  author={Spors, S. and Buchner, H.},
  booktitle={IEEE workshop on applications of signal process. to audio and acoust.},
  pages={171--174},
  year={2007},
  address={New Paltz, New York},
  month= {10},
}

@book{Abramowitz1972,
  title={Abramowitz and Stegun: Handbook of Mathematical Functions},
  author={M. Abramowitz and I. A. Stegun},
  year={1972},
  publisher={US Department of Commerce},
  address = {Washington, DC}
}

@article{sundstrom2025bayesian,
  title={Bayesian Sound Field Reconstruction Using Partial Boundary Information},
  author={Sundstr{\"o}m, D. and Elvander, F. and Jakobsson, A.},
  journal={IEEE Trans. Audio, Speech, Lang. Process.},
  year={2025},
  volume = {33},
  pages = {4620--4631}
}

@inproceedings{brunnstrom2025spatial,
  title={Spatial covariance estimation for sound field reproduction using kernel ridge regression},
  author={Brunnstr{\"o}m, J. and M{\o}ller, M. B. and {\O}stergaard, J. and van Waterschoot, T. and Moonen, M. and Elvander, F.},
  booktitle={IEEE European Sig. Process. Conf.},
  pages={86--90},
  year={2025},
  month={09},
  address={Palermo, Italy}
}

@article{amengual2021optimizations,
  title={Optimizations of the spatial decomposition method for binaural reproduction},
  author={Gari, S. V. A. and Arend, J. M. and Calamia, P. T. and Robinson, P. W.},
  journal={J. Audio Eng. Soc.},
  volume={68},
  number={12},
  pages={959--976},
  year={2021},
  publisher={Audio Engineering Society}
}

@article{pintelon2002frequency,
  title={Frequency domain system identification using arbitrary signals},
  author={Pintelon, R. and Schoukens, J. and Vandersteen, G.},
  journal={IEEE Trans. on Automatic Control},
  volume={42},
  number={12},
  pages={1717--1720},
  year={2002},
  publisher={IEEE}
}

@inproceedings{schoukens2004time,
  title={Time domain identification, frequency domain identification. Equivalencies! Differences?},
  author={Schoukens, J. and Pintelon, R. and Rolain, Y.},
  booktitle={IEEE Amer. Control Conf.},
  volume={1},
  pages={661--666},
  year={2004},
  address = {Boston, MA, USA}
}

@inproceedings{ljung2004state,
  title={State of the art in linear system identification: Time and frequency domain methods},
  author={Ljung, L.},
  booktitle={IEEE Amer. Control Conf.},
  volume={1},
  pages={650--660},
  year={2004},
  address = {Boston, MA, USA}
}

@article{hahmann2022convolutional,
  title={A convolutional plane wave model for sound field reconstruction},
  author={Hahmann, M. and Fernandez-Grande, E.},
  journal={J. Acoust. Soc. Amer.},
  volume={152},
  number={5},
  pages={3059--3068},
  year={2022},
  publisher={AIP Publishing}
}

@book{jacobsen2013fundamentals,
  title={Fundamentals of general linear acoustics},
  author={Jacobsen, F. and Juhl, P. M.},
  year={2013},
  address={New York},
  publisher={John Wiley \& Sons}
}

@inproceedings{sundstrom2023recursive,
  title={Recursive spatial covariance estimation with sparse priors for sound field interpolation},
  author={Sundstr{\"o}m, D. and Lindstr{\"o}m, J. and Jakobsson, A.},
  booktitle={IEEE Statistical Signal Processing Workshop},
  pages={517--521},
  year={2023},
  address = {Hanoi, Vietnam},
  month = {06}
}

@article{samarasinghe2015efficient,
  title={An efficient parameterization of the room transfer function},
  author={Samarasinghe, P. and Abhayapala, T. and Poletti, M. and Betlehem, T.},
  journal={IEEE Trans. Audio, Speech, Lang. Process.},
  volume={23},
  number={12},
  pages={2217--2227},
  year={2015},
  publisher={IEEE}
}

@article{badeau2024statistical,
  title={Statistical wave field theory},
  author={Badeau, R.},
  journal={J. Acoust. Soc. Amer.},
  volume={156},
  number={1},
  pages={573--599},
  year={2024},
  publisher={AIP Publishing}
}

@article{koopmann1989method,
  title={A method for computing acoustic fields based on the principle of wave superposition},
  author={Koopmann, G. H. and Song, L. and Fahnline, J. B.},
  journal={J. Acoust. Soc. Amer.},
  volume={86},
  number={6},
  pages={2433--2438},
  year={1989},
  publisher={Acoustical Society of America}
}

@article{mignot2013room,
  title={Room reverberation reconstruction: Interpolation of the early part using compressed sensing},
  author={Mignot, R. and Daudet, L. and Ollivier, F.},
  journal={IEEE Trans. Audio, Speech, Lang. Process.},
  volume={21},
  number={11},
  pages={2301--2312},
  year={2013},
  publisher={IEEE}
}

@article{johnson1998equivalent,
  title={An equivalent source technique for calculating the sound field inside an enclosure containing scattering objects},
  author={Johnson, M. E. and Elliott, S. J. and Baek, K. H. and Garcia-Bonito, J.},
  journal={J. Acoust. Soc. Amer.},
  volume={104},
  number={3},
  pages={1221--1231},
  year={1998},
  publisher={Acoustical Society of America}
}

@article{tervo2013spatial,
  title={Spatial decomposition method for room impulse responses},
  author={Tervo, S. and P{\"a}tynen, J. and Kuusinen, A. and Lokki, T.},
  journal={J. Audio Eng. Soc.},
  volume={61},
  number={1/2},
  pages={17--28},
  year={2013},
  publisher={Audio Engineering Society}
}

@article{olivieri2024physics,
  title={Physics-informed neural network for volumetric sound field reconstruction of speech signals},
  author={Olivieri, M. and Karakonstantis, X. and Pezzoli, M. and Antonacci, F. and Sarti, A. and Fernandez-Grande, E.},
  journal={EURASIP J. Audio, Speech, Music Process.},
  volume={2024},
  number={1},
  pages={42},
  year={2024},
  publisher={Springer}
}

@article{brunnstrom2025time,
  title={Time-domain sound field estimation using kernel ridge regression},
  author={Brunnstr{\"o}m, J. and M{\o}ller, M. B. and {\O}stergaard, J. and Koyama, S. and van Waterschoot, T. and Moonen, M.},
  journal={IEEE Trans. Audio, Speech, Lang. Process.},
  year={2026},
  volume = {34},
  pages = {1243--1258},
  publisher={IEEE}
}

@article{caviedes2023spatio,
  title={Spatio-temporal {B}ayesian regression for room impulse response reconstruction with spherical waves},
  author={Caviedes-Nozal, D. and Fernandez-Grande, E.},
  journal={IEEE Trans. Audio, Speech, Lang. Process.},
  volume={31},
  pages={3263--3277},
  year={2023},
}

@inproceedings{ueno2018kernel,
  title={Kernel ridge regression with constraint of {H}elmholtz equation for sound field interpolation},
  author={Ueno, N. and Koyama, S. and Saruwatari, H.},
  booktitle={IEEE Int. Workshop Acoust. Signal Enhancement},
  pages={1--440},
  year={2018},
  month = {09},
  address = {Tokyo, Japan}
  
}

@article{cook1955measurement,
  title={Measurement of correlation coefficients in reverberant sound fields},
  author={Cook, R. K. and Waterhouse, R. V. and Berendt, R. D. and Edelman, S. and Thompson Jr, M. C.},
  journal={J. Acoust. Soc. Amer.},
  volume={27},
  number={6},
  pages={1072--1077},
  year={1955},
  publisher={Acoustical Society of America}
}

@article{gonzalez2010measurement,
  title={Measurement of areas on a sphere using Fibonacci and latitude--longitude lattices},
  author={Gonz{\'a}lez, {\'A}.},
  journal={Mathematical geosciences},
  volume={42},
  pages={49--64},
  year={2010},
  publisher={Springer}
}

@inproceedings{brunnstrom2022variable,
  title={Variable span trade-off filter for sound zone control with kernel interpolation weighting},
  author={Brunnstr{\"o}m, J. and Koyama, S. and Moonen, M.},
  booktitle={{IEEE} Int. Conf. Acoust., Speech, Signal Process.},
  pages={1071--1075},
  year={2022},
  month={05},
  address= {Singapore},
}

@article{karakonstantis2024room,
  title={Room impulse response reconstruction with physics-informed deep learning},
  author={Karakonstantis, X. and Caviedes-Nozal, D. and Richard, A. and Fernandez-Grande, E.},
  journal={J. Acoust. Soc. Amer.},
  volume={155},
  number={2},
  pages={1048--1059},
  year={2024},
  publisher={AIP Publishing}
}

@inproceedings{sundstrom2024sound,
  title={Sound field estimation using deep kernel learning regularized by the wave equation},
  author={Sundstr{\"o}m, D. and Koyama, S. and Jakobsson, A.},
  booktitle={IEEE Int. Workshop Acoust. Signal Enhancement},
  pages={319--323},
  year={2024},
  month = {09},
  address = {Aalborg, Denmark}
}

@article{ueno2017sound,
  title={Sound field recording using distributed microphones based on harmonic analysis of infinite order},
  author={Ueno, N. and Koyama, S. and Saruwatari, H.},
  journal={{IEEE} Signal Process. Lett.},
  volume={25},
  number={1},
  pages={135--139},
  year={2017}
}

@inproceedings{horiuchi2021kernel,
  title={Kernel learning for sound field estimation with l1 and l2 regularizations},
  author={Horiuchi, R. and Koyama, S. and Ribeiro, J. G. C. and Ueno, N. and Saruwatari, H.},
  booktitle={{IEEE} Int. Workshop Appl. Signal Process. Audio Acoust.},
  pages={261--265},
  year={2021},
  address = "New Paltz, NY, USA"
}

@book{williams1999fourier,
  title={Fourier acoustics: sound radiation and nearfield acoustical holography},
  author={Williams, E. G.},
  year={1999},
  address={New York},
  publisher={Academic press}
}

@article{habets2006room,
  title={Room impulse response generator},
  author={Habets, E.A.P.},
  journal={Technische Universiteit Eindhoven, Tech. Rep},
  volume={2},
  number={2.4},
  pages={1},
  year={2006}
}

@inproceedings{fernandez2021reconstruction,
  title={Reconstruction of room impulse responses over extended domains for navigable sound field reproduction},
  author={Fernandez-Grande, E. and Caviedes-Nozal, D. and Hahmann, M. and Karakonstantis, X. and Verburg, S. A.},
  booktitle={IEEE Immersive and 3D Audio: from Architecture to Automotive},
  pages={1--8},
  year={2021},
  address = {Bologna, Italy}
}

@article{ribeiro2024sound,
  title={Sound field estimation based on physics-constrained kernel interpolation adapted to environment},
  author={Ribeiro, J. G. C. and Koyama, S. and Horiuchi, R. and Saruwatari, H.},
  journal={IEEE Trans. Audio, Speech, Lang. Process.},
  year={2024},
  volume={32},
  pages={4369-4383},
  publisher={IEEE}
}

@book{wahba1990spline,
  title={Spline models for observational data},
  author={Wahba, G.},
  year={1990},
  publisher={SIAM}
}

@book{rasmussen2005gaussian,
    author = {Rasmussen, C. E. and Williams, C. K. I.},
    title = {Gaussian Processes for Machine Learning},
    publisher = {The MIT Press},
    year = {2005},
    month = {11},
    isbn = {9780262256834},
    doi = {10.7551/mitpress/3206.001.0001},
    url = {https://doi.org/10.7551/mitpress/3206.001.0001}
}

@article{caviedes2021gaussian,
  title={Gaussian processes for sound field reconstruction},
  author={Caviedes-Nozal, D. and Riis, N.A.B. and Heuchel, F.M. and Brunskog, J. and Gerstoft, P. and Fernandez-Grande, E.},
  journal={J. Acoust. Soc. Amer.},
  volume={149},
  number={2},
  pages={1107--1119},
  year={2021},
  publisher={AIP Publishing}
}

@article{verburg2018reconstruction,
  title={Reconstruction of the sound field in a room using compressive sensing},
  author={Verburg, S. A. and Fernandez-Grande, E.},
  journal={J. Acoust. Soc. Amer.},
  volume={143},
  number={6},
  pages={3770--3779},
  year={2018}
}

@article{verburg2024optimal,
  title={Optimal sensor placement for the spatial reconstruction of sound fields},
  author={Verburg, S. A. and Elvander, F. and van Waterschoot, T. and Fernandez-Grande, E.},
  journal={EURASIP J. Audio, Speech, Music Process.},
  volume={2024},
  number={1},
  pages={41},
  year={2024},
  publisher={Springer}
}

@article{antonello2017room,
  title={Room impulse response interpolation using a sparse spatio-temporal representation of the sound field},
  author={Antonello, N. and De Sena, E. and Moonen, M. and Naylor, P. A and Van Waterschoot, T.},
  journal={{IEEE} Trans. Audio, Speech, Lang. Process.},
  volume={25},
  number={10},
  pages={1929--1941},
  year={2017},
  publisher={IEEE}
}

@ARTICLE{Sundstrom2023,
  author={Sundström, D. and Elvander, F. and Jakobsson, A.},
  journal={IEEE Trans. Signal Process.}, 
  title={Optimal Transport Based Impulse Response Interpolation in the Presence of Calibration Errors}, 
  year={2024},
  pages={1-12},
  doi={10.1109/TSP.2024.3372249}}

@article{Koyama2021,
   author = {S. Koyama and J. Brunnström and H. Ito and N. Ueno and H. Saruwatari},
   doi = {10.1109/TASLP.2021.3107983},
   issn = {23299304},
   journal = {IEEE Trans. Audio Speech and Lang. Process.},
   title = {Spatial Active Noise Control Based on Kernel Interpolation of Sound Field},
   pages = {3052-3063}, 
   volume = {29},
   year = {2021},
}

@article{Naylor2011,
   author = {P. A. Naylor and N. D. Gaubitch and E. Cross},
   doi = {10.3397/1.3532780},
   issn = {07362501},
   issue = {2},
   journal = {Noise Control Engineering Journal},
   title = {Speech Dereverberation},
   volume = {59},
   year = {2011},
}

\end{document}